\documentclass{article}
\usepackage{amsmath}
\usepackage{amssymb}
\usepackage{amsthm,mathtools,amsmath,amssymb,booktabs,subcaption}
\usepackage{hyperref}
\usepackage{theoremref}
\usepackage{comment}

\def\uball{\mathbb{B}}

\def\reals{\mathbb{R}}
\def\ereals{\overline{\mathbb{R}}}

\def\comp{\raise 1pt \hbox{$\scriptstyle\circ$}}

\def\argmin{\mathop{\rm argmin}\limits}
\def\argmax{\mathop{\rm argmax}\limits}
\def\minimize{\mathop{\rm minimize}\limits}
\def\maximize{\mathop{\rm maximize}\limits}
\def\st{\mathop{\rm subject\ to}}
\def\underto#1{\stackrel{#1}{\rightarrow}}
\def\dom{\mathop{\rm dom}}

\def\gph{\mathop{\rm gph}}
\def\lev{\mathop{\rm lev}\nolimits}

\def\upto{{\raise 1pt \hbox{$\scriptstyle \,\nearrow\,$}}}
\def\downto{{\raise 1pt \hbox{$\scriptstyle \,\searrow\,$}}}
\def\cl{\mathop{\rm cl}\nolimits}

\def\inte{\mathop{\rm int}\nolimits}
\def\rinte{\mathop{\rm rint}}

\def\tos{\rightrightarrows}

\def\ovr{\mathop{\rm over}}

\newtheorem{theorem}{Theorem}

\newtheorem{lemma}[theorem]{Lemma}

\newtheorem{remark}[theorem]{Remark}
\newtheorem{assumption}{Assumption}

\title{Social welfare and price discovery in double auction markets}
\author{Teemu Pennanen\\ {\scriptsize King's College London}
}

\begin{document}

\maketitle

\begin{abstract}
The tendency of the double auction mechanism to drive prices to competitive equilibrium has been well documented in laboratory experiments, but the phenomenon has lacked a theoretical explanation. This paper studies dynamic double auctions in a pure exchange economy where agents bid their indifference prices implied by their current holdings and preferences. We show that Walras equilibria coincide with the fixed points of the double auction and that repeated double auctions generate bounded sequences of allocations and prices whose cluster points are Walras equilibria with transfers.
\end{abstract}

\section{Introduction}

Most securities, commodity and energy exchanges are based on the double auction mechanism, where buyers and sellers submit limit orders and the most generous offers on the two sides are matched. The laboratory experiments of Smith~\cite{smi62} showed that double auctions converge to the competitive equilibrium with remarkable reliability, even with few traders who know nothing of the aggregate supply and demand. Smith later framed these findings as support for the ``Hayek hypothesis'' that markets achieve efficient outcomes with dispersed private information and without price-taking or central coordination~\cite{smi82}. The finding has been replicated many times, most recently in a large-scale study by Ikica et al.~\cite{ijnnp23}, but its theoretical explanation has remained elusive; see the discussion e.g.\ in~\cite{plo08,smi10,prt13}.

What kind of theory could explain the efficiency of the double auction mechanism? Many of the existing models of double auctions explain price convergence through specific assumptions on how traders form their bids, such as equilibrium strategies or rules based on the history of bids and prices; see e.g.~\cite{wil87,fri91,el93,gd98,abl20}. Convergence is then attributed to the sophistication of the traders rather than to the market mechanism. Gode and Sunder~\cite{gs93} found in simulations that even traders who bid randomly, subject only to not trading at a loss, realize most of the available gains from trade, suggesting that much of the work is done by the market institution itself. Their results are, however, numerical, limited to single units, and do not address prices. Ikica et al.~\cite{ijnnp23} find experimentally that convergence is driven not by centralized price information but by centralized clearing.

In~\cite{pen22}, we proposed a model of double auctions in a pure exchange economy, where each agent bids their indifference prices given their current holdings and preferences. It was shown in~\cite{pen22} that the double auction clears with zero trades exactly at Pareto optimal allocations and that, starting from disequilibrium, repeated auctions generate allocations converging to individually rational Pareto optima. The experimental literature, however, is primarily concerned with prices: its central stylized fact is that transaction prices converge to competitive equilibrium prices. Unlike in the Walrasian model, where agents respond to announced prices, prices in a double auction are an output of the mechanism: by the time a transaction price is observed, the opportunity to trade at it has already passed. The present paper studies how such prices come to agree with competitive equilibrium prices.

Our main result, Theorem~\ref{thm:conv}, shows that every cluster point of the sequence of allocations and market-clearing prices generated by a double auction market is a Walras equilibrium with transfers. We also give conditions on the utilities under which repeated double auctions are well defined and generate bounded sequences of allocations and prices, even in the face of nonuniqueness in market-clearing, which is common in double auctions (Theorem~\ref{thm:bdd}). In the process, we show that an allocation-price pair is a Walras equilibrium if and only if the allocation is a double auction equilibrium and, up to normalization, the price is a corresponding market-clearing price (Theorem~\ref{thm:ft} and Remark~\ref{rem:cheaper}). As corollaries, we obtain versions of the first and second fundamental theorems of welfare economics without requiring that the utility functions be continuous. All our results apply to general concave utility functions without requiring monotonicity nor that their domains are contained in the positive orthant. This allows, in particular, for short positions and negative prices, which occur in practice, for example, in electricity markets.

Our results are related more broadly to the literature on trading out of equilibrium. In the Edgeworth process of Uzawa~\cite{uza62} and the Hahn process of \cite{hn62}, trades take place out of equilibrium, but prices are either absent or still adjusted by the auctioneer, while in \cite{fis83}, stability rests on assumptions on the agents' expectations. 
In these models, trades follow a postulated adjustment rule. The recent article of Deride, Jofr\'e and Rockafellar~\cite{djr26}, in contrast, proposes a realistic model for dynamic bilateral trading and shows that it reaches an equilibrium of prices and holdings. In the context of financial markets, our model is concerned with trading of standardized products in exchanges, while the model of \cite{djr26} describes decentralized “over-the-counter” markets.


The rest of the paper is organized as follows. Section~2 recalls the double
auction and its variational formulation. Section~3 describes bidding based on
endowments and preferences, and Section~4 reviews properties of the resulting
indifference price functions. Section~5 relates double auction equilibria to
Walras equilibria and Pareto optimality. Section~6 establishes existence and
boundedness in repeated double auctions, while Sections~7 and~8 prove
convergence of allocations and prices, respectively. The appendix reviews
the relevant theory from convex optimization used in the analysis.

\section{The double auction}\label{sec:da}

In a double auction market, the most most generous buying offers are matched against the most generous selling offers. At the end of the bidding period, the selling offers can be arrange into a nondecreasing {\em supply curve} $s:\reals_+\to\reals$ that, for each $x\in\reals_+$, gives the marginal price if one were to buy $x$ units from the most generous sellers. Similarly, the buying offers can be arrange into a nondincreasing {\em demand curve} $d:\reals_+\to\reals$ that, for each $x\in\reals_+$, gives the marginal price if one were to sel $x$ units to the most generous buyers. The market is cleared by finding crossing point $x$. The values of the curves give the {\em market-clearing price} and the $x$ units are traded at that price. Figure~\ref{fig:double-auction} illustrates a situation where vertical parts of the curves overlap so that the market clearing price is not unique. It may also happen that the horizontal parts overlap so that the market clearing trades are not unique although, in that case, it is natural to match the maximal quantity. The conclusions of this paper do not depend on the rule used to choose market-clearing prices and trades in the case of nonuniqueness.

\begin{figure}[t]
  \centering
  \begin{subfigure}[t]{0.49\textwidth}
    \centering
    \includegraphics[width=\linewidth]{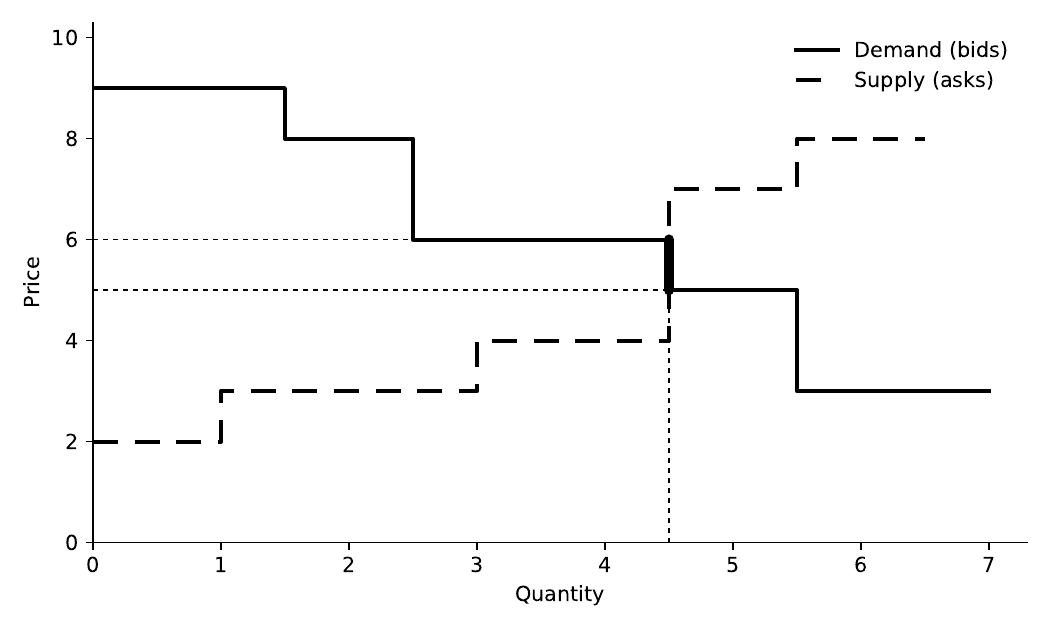}
    \caption{Demand and supply curves $d$ and $s$.}
    \label{fig:curves}
  \end{subfigure}\hfill
  \begin{subfigure}[t]{0.49\textwidth}
    \centering
    \includegraphics[width=\linewidth]{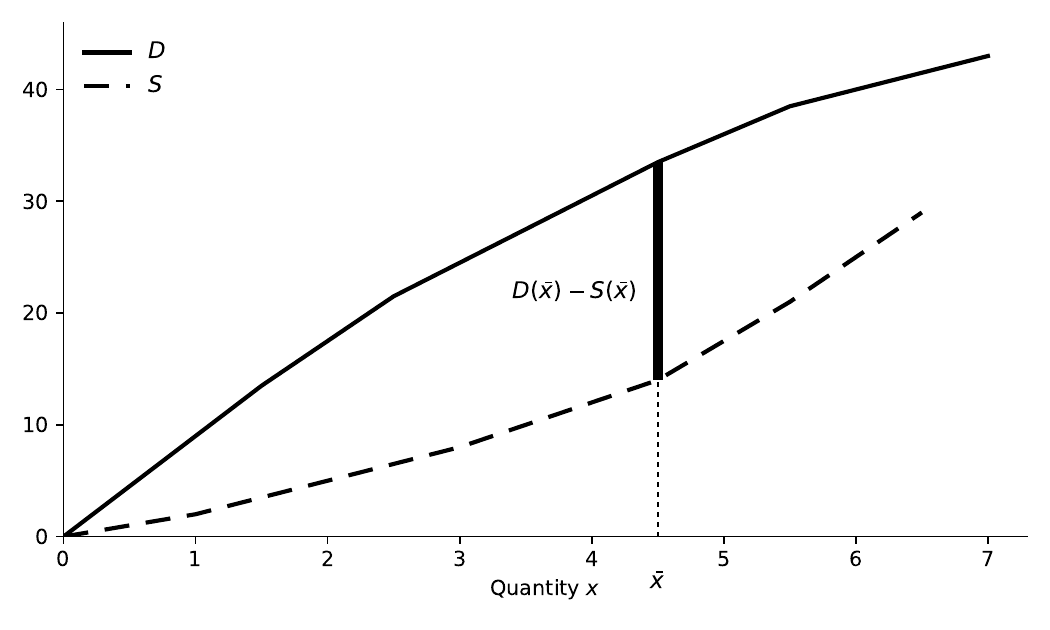}
    \caption{Revenue and cost functions $D$ and $S$.}
    \label{fig:surplus}
  \end{subfigure}
  \caption{Market clearing in a double auction. The market clearing
    quantity $\bar x = 4.5$ maximizes the consumer surplus $D(x)-S(x)$,
    and the market clearing prices form the interval
    $\partial D(\bar x)\cap\partial S(\bar x) = [5,6]$.}
  \label{fig:double-auction}
\end{figure}

Equivalently, defining
\[
S(x)=\int_0^xs(z)dz\quad\text{and}\quad D(x)=\int_0^xd(z)dz,
\]
the market clearing problem can be written as the optimization problem
\[
\maximize\quad D(x)-S(x)\quad\ovr x\in\reals_+,
\]
where the objective would give the maximum profit one could get buy buying $x$ units from the sellers and selling them to the willing buyers. Since $d$ is nonincreasing and $s$ is nondecreasing, the functions $D$ and $S$ are concave and convex, respectively. Assuming that the buying offers came from a finite set $I$ of buyers, we have
\[
D(x) = \sup\{\sum_{i\in I}D_i(x_i)\mid\sum_{i\in I}x_i=x\},
\]
where $D_i$ is the demand function submitted by agent $i\in I$. This simply means that the total amount $x$ sold is split among the potential buyers in order to maximize revenue of the sale. Similarly,
\[
S(x) = \inf\{\sum_{i\in I}S_i(x_i)\mid\sum_{i\in I}x_i=x\}.
\]
Interpreting sales as negative purchases and negative cost as revenue, the market clearing problem can be written as
\begin{equation}\label{da0}
\begin{aligned}
  &\maximize\quad & &\sum_{i\in I}D_i(x_i)\quad\ovr\ x\in\reals^I,\\
  &\st\quad & &\sum_{i\in I}x_i=0,
\end{aligned}
\end{equation}
where $x:=(x_i)_{i\in I}$ and we have extended the definition of the demand function by setting
\[
D_i(x_i) 
=
\begin{cases}
D_i(x_i) & \text{if $x_i\ge 0$},\\
-S_i(-x_i) & \text{if $x_i\le 0$}.
\end{cases}
\]
The least selling price submitted by a rational agent is at least as high as his highest buying price, so the function $D_i$ is concave. In most double auctions in practice, the traders can only submit ``limit orders'', which correspond to simple polyhedral functions $D_i$. More general concave functions $D_i$ can be approximated by finite collections of limit orders; see \cite[Section~2]{pen22}.

As in \cite{lue92,lue96,pen22}, we will study a market with multiple assets. We do this in order to draw connections to the classical welfare theory and also because the usual single-asset double auction is covered as a special case. We will assume that the market has a finite set $J$ of asset and that the traders submit demand function $D_i:\reals^J\to\ereals$. The market clearing problem now reads
\begin{equation}\label{da}
\begin{aligned}
  &\maximize\quad & &\sum_{i\in I}D_i(x_i)\quad\ovr\ x\in\reals^{I\times J},\\
  &\st\quad & &\sum_{i\in I}x_i=0,
\end{aligned}
\end{equation}
where the only difference to \eqref{da0} is that each $x_i$ is a portfolio vector in $\reals^J$. It should be noted that we allow the functions $D_i$ to take values in the {\em extended reals} $\ereals:=\reals\cup\{-\infty,+\infty\}$. This allows us to describe agents who only submit e.g.\ buy orders by defining $D_i(x)=-\infty$ for $x\notin\reals^J_+$.

Problem \eqref{da} is amenable to convex analysis; see~\cite{roc70a} and \cite{pen22}. The {\em Lagrangian function} associated with \eqref{da} is given by
\[
L(x,p) = \sum_{i\in I}[D_i(x_i) - x_i\cdot p]
\]
for all $x\in\reals^{I\times J}$ and $p\in\reals^J$. Under quite general conditions, a $z\in\reals^{I\times J}$ solves the market clearing problem \eqref{da} if and only if there exists a price vector $p$ such that
\begin{equation}\label{kkt}
\begin{aligned}
  \partial D_i(z_i)&\ni p,\\
  \sum_{i\in I} z_i&=0;
\end{aligned}
\end{equation}
see Theorem~\ref{thm:kkt2} in the appendix. Here and in what follows, $\partial D_i(z_i)$ denotes the {\em superdifferential} of $D_i$ at $z_i$ i.e., the set of {\em supergradients} $v_i\in\reals^J$ of $D_i$ at $z_i$ defined by
\begin{equation}\label{eq:sg}
D_i(x_i)\le D_i(z_i) + (x_i-z_i)\cdot v_i\quad\forall x_i\in\reals^J.
\end{equation}
Optimal solutions to \eqref{da} are called {\em market clearing trades} while the vector $p$ gives the {\em market-clearing prices}. At market clearing, agent $i\in I$ buys the portfolio $z_i$ for price $z_i\cdot p$.

Since $D_i(0)=0$, choosing $x_i=0$ in \eqref{eq:sg} gives
\[
z_i\cdot p\le D_i(z_i).
\]
In other words, the agent pays less for $z_i$ than what they offered to pay for it. The difference
\[
CS_i:=D_i(z_i)-z_i\cdot p
\]
is called the {\em consumer surplus}. The constraints of \eqref{da} imply that the optimum value in \eqref{da} is equal to the  {\em total consumer surplus}
\begin{equation}\label{eq:cs}
CS:=\sum_{i\in I}CS_i.
\end{equation}

If the market is cleared by the zero vector $z=0$, the double auction is said to be in {\em equilibrium}. The corresponding $\bar p$ gives the {\em equilibrium prices}. It is natural to ask if that kind of equilibria have something to do with classical notions of equilibria in exchange economies where agents' preferences over different commodity bundles are modeled by utility functions. We will answer this question in Section~\ref{sec:equilibrium}.

The market-clearing prices solve the {\em dual problem}
\begin{equation}\label{dual}
\minimize\quad -\sum_{i\in I}D_i^*(p)\quad\ovr\ p\in\reals^J,
\end{equation}
where
\begin{equation}\label{eq:conj}
D_i^*(p) := \inf_{x_i\in\reals^J}\{x_i\cdot p - D_i(x_i)\}
\end{equation}
is the {\em concave conjugate} of $D_i$. Indeed, conditions \eqref{kkt} hold if and only if $x$ solves~\eqref{da}, $p$ solves \eqref{dual} and the optimum values are equal; see Theorem~\ref{thm:kkt1} in the appendix. One can thus study market-clearing prices by analyzing the dual problem~\eqref{dual}.

It is worth noting that we allow for negative positions as well as negative prices. However, if even one agent has demand function which is nondecreasing in the partial order given by the positive orthant $\reals^J_+$, then the prices are necessarily positive. A more general property is given by the following remark. As usual, a set $C$ is said to be a {\em cone} if $x\in C$ and $\alpha>0$ imply $\alpha x\in C$.

\begin{remark}\label{lem:cone}
If $D_i$ is nondecreasing with respect to a cone $K_i\subset\reals^J$ in the sense that $x-x'\in K_i$ implies
\[
D_i(x)\le D_i(x'),
\]
then the market-clearing prices belong to the {\em polar cone}
\[
K_i^*:=\{p\in\reals^J\mid x\cdot p\le 0\ \forall x\in K_i\}.
\]
In particular, if $K_i=\reals^J_-$, then $p\in\reals^J_+$.
\end{remark}

\begin{proof}
The definition in \eqref{eq:sg} gives
\[
D_i(z_i+x_i)\le D_i(z_i) + x_i\cdot p\quad\forall x_i\in -K_i.
\]
Under the growth property, this implies
\[
0\le x_i\cdot p\quad\forall x_i\in -K_i.
\]
In other words, $p\in K_i^*$.
\end{proof}

\section{Bidding with endowments and preferences}

In general, the agents' trading offers depend on their preferences as well as their existing holdings in the assets. For example, an agent with a large short position in a risky asset may be willing to buy the asset at a price higher than an agent who has a large long position in the same asset. Such dependencies are naturally described by the agents' ``indifference prices'' as defined below.

We describe the preferences of agent $i\in I$ by a {\em utility function} $u_i:\reals^J\to\ereals$. Allowing the utility functions to take on the value $-\infty$, we can describe constraints an agent may have on their holdings. The set
\[
\dom u_i:=\{x_i\in\reals^J\mid u_i(x_i)>-\infty\}
\]
is known as the {\em domain} of $u_i$ or the {\em feasible set} of agent $i\in I$. Also, we do not require the domains of the functions to be contained in $\reals^J_+$ nor that the functions be increasing with respect to $\reals^J_+$. This allows, in particular, the agents to take short positions. Note also that we do not assume the utility functions to be differentiable. This allows for defining $u_i$ through optimization which is relevant for agents e.g.\ with {\em production facilities} which may be controlled by the agent.

We will denote the {\em initial holdings} of agent $i\in I$ by $x^0_i\in\reals^J$. We will assume throughout that $x^0_i\in\dom u_i$. The {\em indifference buying price} of agent $i\in I$ for a portfolio $x_i\in\reals^J$ is given by
\begin{equation}\label{idp}
D_i(x^0_i;x_i):=\sup\{r\in\reals\mid u_i(x_i^0+x_i-rg)\ge u_i(x_i^0)\},
\end{equation}
Here $g\in\reals^J$ is a {\em numeraire portfolio} in terms of which all prices are quoted. In most double auctions, prices are quoted in terms of cash. This corresponds to $g$ being the unit vector with one unit of cash and zero in all other assets. The indifference price $D_i(x^0_i;x_i)$ is the highest price the agent could pay for $x_i$ without worsening their holdings as measured by their utility function. In other words, the agents only offer to enter trades that would make them better off than they were before the trade.

Indifference pricing has a long history in finance and insurance going back, at least, to \cite{hn89}. The indifference pricing principle can be found already in~\cite[Section~4.1.2]{buh70} under the name of ``principle of zero utility"; see \cite{car9} and \cite{pen14} for further references. Indifference pricing can also be seen as an extension of ``risk-neutral'' pricing to which it reduces in the case of replicable instruments; see~\cite[Theorem~4.1]{pen14}. The indifference functions are closely related to the ``benefit functions'' studied in \cite{lue92,lue96} and defined as in \eqref{idp} but with $x^0=0$ on the left side of the inequality.`

The dependence of the indifferences price on the current holdings $x^0_i$ is quite natural from the practical point of view. For example, an agent who has a large position in a risky asset, may have less appetite for buying more of the asset than someone who has a short position in the asset. The dependence on the current holdings will be the key to explaining the tendency of double auction markets to a converge to an equilibruium; see Sections~\ref{sec:ra} and \ref{sec:prices} below. We keep the agent's current portfolio $x^0_i$ in the notation explicitly to emphasize the dependence.

We assume that each agent bids their indifference prices. This is a natural
benchmark for several reasons. Bidding the indifference prices requires no
knowledge of the other agents' bids or preferences, and it is safe: whatever
the other agents bid, the resulting trade leaves the agent at least as well
off as before, since the price paid never exceeds the agent's indifference
price. Moreover, in markets with many participants, a single agent has little
influence on the market clearing price, and the incentive to deviate from
truthful bidding vanishes as the number of traders grows~\cite{rsw94,cs06}.
With few traders, an agent may gain by shading their bids, i.e.\ by
submitting a demand function that lies below their indifference function,
and in repeated auctions an agent may also spread trades over time to reduce
their price impact. Studies of such strategic behaviour in dynamic markets
suggest, however, that it slows down convergence to efficient allocations
rather than preventing it~\cite{vay99,dz17}. Truthful bidding thus isolates
the role of the market mechanism itself, and even in this setting a
theoretical explanation of price discovery in double auctions has been
missing.

Assuming all agents bid their indifference prices, the market clearing problem \eqref{da} can be written as
\[
\begin{aligned}
  &\maximize\quad & &\sum_{i\in I}D_i(x^0_i;x_i)\quad\ovr\ x\in\reals^{I\times J},\\
  &\st\quad & &\sum_{i\in I}x_i=0.
\end{aligned}
\]
After market clearing, the trades $z_i$ take place at market-clearing prices $p\in\reals^J$ so that an agent's existing position $x^0_i$ is changed to
\[
x^1_i := x^0_i + z_i - (z_i\cdot p)g.
\]
Since $z$ satisfies the constraints of \eqref{da}, we have
\[
\sum_{i\in I}x^1_i = \sum_{i\in I}x^0_i,
\]
so $x^1$ is a feasible reallocation of the initial holdings $x^0$.

The above model was introduced in~\cite{pen22}, where it was found to share many of the properties of double auctions observed in practice and in laboratory experiments. In particular, it was shown in~\cite[Section~6]{pen22} that, under mild conditions on the agents' preferences, repeated double auctions converge to Pareto optimal allocations. 
The main results of the present paper make further connections to classical welfare theory and show, in particular, that the cluster points of the market-clearing prices are competitive equilibrium prices, in the sense of classical welfare theory.

Plugging in the definitions of the indifference functions $D_i(x^0_i;\cdot)$ into \eqref{da} one can write the market clearing problem as
\begin{equation}\label{eq:opt}
\begin{aligned}
  &\maximize\quad & & r\quad\ovr\ r\in\reals,\ w\in\reals^{I\times J}\\
  &\st\quad & & \sum_{i\in I}w_i + rg = \sum_{i\in I}x^0_i,\\
  & & & u_i(w_i) \ge u_i(x_i^0)\quad i\in I;
\end{aligned}
\end{equation}
see \cite[Lemma~2]{pen22}. Thus, the total consumer surplus in \eqref{eq:cs} equals the greatest amount of the numeraire that can be removed from the economy while reallocating the assets so that no agent is worse off compared to the current allocation. Formulation \eqref{eq:opt} will be particularly useful in the analysis of repeated double auctions in Sections~\ref{sec:ebdd}, \ref{sec:ra} and \ref{sec:prices} below. The optimum value of problem \eqref{eq:opt} equals the ``resource benefit function'' defined in \cite[Section~4]{lue96} for the utility levels $u_i(x^0_i)$. The same object appears already in \cite{all43} where it is called `distributable surplus''; see~\cite[page~174]{all78}. 
The connection to the double auction mechanism seems to have gone unnoticed untill~\cite{pen22}.

To emphasize the dependence on the current allocation $x^0$, we will denote the optimum value of the market clearing problem by $CS(x^0)$.

\begin{remark}[Pareto improvements and diminishing total surplus]\label{rem:cs}
As observed in \cite[Remark~2]{pen22}, the double auction makes {\em Pareto improvements} in the sense that $u_i(x^1_i)\ge u_i(x^0_i)$ for all $i\in I$. Our standing assumption $x^0_i\in\dom u_i$ thus implies $x^1_i\in\dom u_i$. It follows from the formulation \eqref{eq:opt} of the market clearing problem that $CS(x^1)\le CS(x^0)$, i.e.\ the double auction reduces the remaining total consumer surplus. 
\end{remark}

\section{Gradients and growth properties}\label{sec:gd}

This section reviews some basic properties of the indifference functions $D_i$ which will be needed in the analysis of the double auction market later on. As in \cite{pen22}, we will assume that the numeraire is a strictly preferred asset in the following sense.

\begin{assumption}\label{ass:g}
The utility functions $u_i$ are upper semicontinuous, concave and strictly increasing in the direction of the numeraire asset $g$ in the sense that
\[
u_i(x_i+rg)>u_i(x_i)
\]
for all $x_i\in\dom u_i$ and $r>0$.
\end{assumption}

Trivially,
\[
D_i(x^0_i;x_i)=\sup\{r\in\reals\mid x_i-rg\in C_i(x_i^0)\},
\]
where
\[
C_i(x_i^0):=\{w\in\reals^J\mid u_i(x_i^0+w_i)\ge u_i(x_i^0)\}
\]
is the set of all portfolios the agent would be willing to add to their current holdings.

Much like the classical existence conditions in convex optimization (see \cite[Section~27]{roc70a}), our analysis of well-posedness of the double auction mechanism in Section~\ref{sec:ebdd} below is based on the theory of ``recession cones'' (``horizon cones'' in \cite{rw98}). Given a closed convex set $C\subset\reals^J$, its {\em recession cone}
\[
C^\infty:=\{x\in\reals^J\mid \bar x+\alpha x\in C,\ \forall\alpha>0\}
\]
is closed, convex and independent of the choice of $\bar x\in C$; see \cite[Theorem~8.3]{roc70a}. The recession cone describes the behaviour of the set infinitely far from the origin. In particular, the set $C$ is bounded if and only if $C^\infty=\{0\}$; see \cite[Theorem~8.4]{roc70a}. Given a proper usc concave function $f$, its {\em recession function}
\[
f^\infty(x):=\inf_{\alpha>0}\frac{f(\bar x+\alpha x)-f(\bar x)}{\alpha}
\]
is usc concave and positively homogeneous and independent of the choice of $\bar x\in\dom f$; see \cite[Theorem~8.5]{roc70a}. The recession function describes the behaviour of the function infinitely far from the origin. It is easily checked that the recession function of the indicator function $\delta_C$ of a set $C\subseteq\reals^n$ equals the indicator function of the recession cone $C^\infty$. Recall that the {\em indicator function} takes the value $0$ on $C$ and $+\infty$ outside of $C$. Given $\alpha<\sup f$, the recession cone of the {\em level set}
\[
\lev_\alpha f:=\{x\in\reals^J\mid f(x)\ge\alpha\}
\]
equals $\lev_0 f^\infty$; see \cite[Theorem~8.5]{roc70a}. 

We will denote the recession function of $D_i(x^0_i;\cdot)$ by $D_i^\infty(x^0_i;\cdot)$. 

\begin{lemma}\label{lem:D}
Let $x^0_i\in\dom u_i$. Under Assumption~\ref{ass:g}, the function $D_i(x^0_i;\cdot)$ is proper concave and upper semicontinuous with $D_i(x^0_i;0)=0$,
\[
D_i(x^0_i;x_i+\alpha g) = D_i(x^0_i;x_i)+\alpha\quad\forall\alpha\in\reals,
\]
\[
D^\infty_i(x^0_i;x_i)=\sup\{r\in\reals\mid x_i-rg\in C_i(x^0_i)^\infty\},
\]
and the supremum above and in the definition of $D_i$ are attained whenever they are finite. Moreover, $C_i(x^0_i)^\infty = \lev_0 u_i^\infty$.
\end{lemma}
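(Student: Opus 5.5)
The plan is to prove everything from the representation $D_i(x^0_i;x_i)=\sup\{r\mid x_i-rg\in C_i(x^0_i)\}$. Here $C_i(x^0_i)$ is a translate of the superlevel set $\lev_{u_i(x^0_i)}u_i$, namely $C_i(x^0_i)=\lev_{u_i(x^0_i)}u_i-x^0_i$. By Assumption~\ref{ass:g}, $u_i$ is usc and concave, so $C_i:=C_i(x^0_i)$ is closed and convex, and it contains $0$. I will treat $D_i$ as the value function of a parametric problem. Its hypograph is
\[
\mathrm{hypo}\,D_i(x^0_i;\cdot)=\{(x,r)\mid x-rg\in C_i\},
\]
which uses that $C_i-\reals_+g\subseteq C_i$. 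This inclusion holds by Assumption~\ref{ass:g}: if $w\in C_i$ and $s\ge0$, then $u_i(x^0_i+w)\ge u_i(x^0_i)$, hence $x^0_i+w\in\dom u_i$, and monotonicity in direction $g$ applies. More precisely, $u_i(x^0_i+w)\le u_i(x^0_i+w+sg)$ is the statement we need read backwards. What we actually need is that $x-rg\in C_i$ implies $x-r'g\in C_i$ for $r'<r$. Writing $x-r'g=(x-rg)+(r-r')g$, adding a positive multiple of $g$ can only increase $u_i$, so the inclusion holds. Thus the set $\{r\mid x-rg\in C_i\}$ is an interval unbounded below. It is also closed, since it is the preimage of the closed set $C_i$ under the continuous map $r\mapsto x-rg$. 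Hence its supremum is attained when finite. The hypograph is the preimage of the closed convex set $C_i$ under the linear map $(x,r)\mapsto x-rg$, so it is closed and convex. This gives concavity and upper semicontinuity at once.

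For the remaining elementary claims, translation invariance follows from $x+\alpha g-rg=x-(r-\alpha)g$. For $D_i(x^0_i;0)=0$: we have $0\in C_i$, so $D_i\ge0$. Moreover, $-rg\notin C_i$ for $r>0$, because $u_i(x^0_i)=u_i(x^0_i-rg+rg)>u_i(x^0_i-rg)$ whenever $x^0_i-rg\in\dom u_i$, and trivially otherwise. Properness then follows: the function is finite at $0$, and a usc concave function finite at some point never takes the value $+\infty$ once it is closed and finite somewhere. A cleaner route to the same conclusion is to note that if $D_i(x)=+\infty$ at some $x$, then concavity together with translation invariance would force $D_i(0)=+\infty$ along a segment, using upper semicontinuity of the closed concave function. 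I will cite the standard fact that a usc concave function with a finite value cannot take the value $+\infty$ \cite[Section~7]{roc70a}.

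For the recession function, I will use that the hypograph of $f^\infty$ is the recession cone of the hypograph of $f$ \cite[Theorem~8.7]{roc70a}. Recession cones commute with preimages under linear maps when the preimage is nonempty, which one checks directly from the definition using a base point $(0,0)$. Hence
\[
(\mathrm{hypo}\,D_i)^\infty=\{(x,r)\mid x-rg\in C_i^\infty\}.
\]
This gives the formula for $D_i^\infty$, and attainment follows exactly as before because $C_i^\infty$ is closed and $C_i^\infty-\reals_+g\subseteq C_i^\infty$. Finally, $C_i^\infty=(\lev_{u_i(x^0_i)}u_i)^\infty$, since translation does not change recession cones. When $u_i(x^0_i)<\sup u_i$, which always holds by Assumption~\ref{ass:g} since $u_i(x^0_i+g)>u_i(x^0_i)$, the level-set result quoted above gives $\lev_0u_i^\infty$.

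The only delicate point is properness, specifically excluding the value $+\infty$. I expect to settle it by the closed-concave fact above, checking that the hypograph is closed and that $D_i(0)$ is finite.
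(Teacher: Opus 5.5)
Your proof is correct, but it takes a more elementary route than the paper. The paper writes $-D_i(x^0_i;\cdot)$ as the inf-projection over $r$ of $f_i(r,x_i)=-r+\delta_{C_i(x^0_i)}(x_i-rg)$. It computes $f_i^\infty$ by \cite[Theorem~9.3]{roc70a}, and then gets concavity, upper semicontinuity, attainment and the recession formula in one step from the general inf-projection result \cite[Theorem~A.77]{pp24]}. Here the hypothesis of that result reduces to $-rg\notin C_i(x^0_i)^\infty$ for $r>0$.

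You instead use $C_i(x^0_i)+\reals_+g\subseteq C_i(x^0_i)$ to show that $\{r\mid x_i-rg\in C_i(x^0_i)\}$ is closed and unbounded below. Hence the hypograph of $D_i(x^0_i;\cdot)$ is exactly the linear preimage $\{(x_i,r)\mid x_i-rg\in C_i(x^0_i)\}$. Because no projection is involved, closedness is automatic, and the recession formula reduces to recession cones commuting with linear preimages. Properness you obtain separately from $D_i(x^0_i;0)=0$ and the fact you cite from \cite[Section~7]{roc70a}.

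Your version is self-contained and needs no machinery on inf-projections. The paper's version is shorter given the cited theorem, and does not use the hypograph identity or weak monotonicity in $g$. It would go through for any nonempty closed convex $C_i(x^0_i)$ with $-g\notin C_i(x^0_i)^\infty$.

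A few slips to fix:
\begin{itemize}
\item You twice write $C_i-\reals_+g\subseteq C_i$, and likewise for $C_i^\infty$. This is false, as your own observation $-rg\notin C_i$ for $r>0$ shows. The inclusion you actually verify and use is $C_i+\reals_+g\subseteq C_i$. For $C_i^\infty$, attainment needs only closedness anyway.
\item The identity $\mathrm{hypo}\,f^\infty=(\mathrm{hypo}\,f)^\infty$ comes from the definition of the recession function together with \cite[Theorem~8.5]{roc70a}, not Theorem~8.7.
\item Your ``cleaner route'' for excluding $+\infty$ does not need translation invariance. If $D_i(x^0_i;x)=+\infty$, concavity gives $D_i(x^0_i;\lambda x)\ge(1-\lambda)D_i(x^0_i;0)+\lambda D_i(x^0_i;x)=+\infty$ for $\lambda\in(0,1]$. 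Upper semicontinuity at $0$ then contradicts $D_i(x^0_i;0)=0$.
\end{itemize}
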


\begin{proof}
We have
\[
-D_i(x^0_i;x)=\inf_{r\in\reals} f(r,x),
\]
where $f:\reals\times\reals^J\to\ereals$ is the convex function given by
\[
f_i(r,x_i):=-r+\delta_{C_i(x^0_i)}(x_i-rg).
\]
By \cite[Theorem~9.3]{roc70a},
\[
f_i^\infty(r,x_i):=-r+\delta_{C_i(x^0_i)^\infty}(x_i-rg).
\]
The concavity, upper semicontinuity and the expression for the recession function as well as the attainment of the infima now follow from \cite[Theorem~A.77]{pp24}. Since Assumption~\ref{ass:g} implies $u_i(x^0_i)<\sup u_i$, the last claim follows from \cite[Theorem~8.7]{roc70a}.
\end{proof}

Recall that the {\em support function} $\sigma_C:\reals^J\to\ereals$ of a set $C\subset\reals^J$ is the positively homogeneous convex function defined, for every $p\in\reals^J$, by
\[
\sigma_C(p):=\sup_{x\in C}x\cdot p.
\]
The {\em normal cone} of $C$ at a point $\bar x\in C$ is the closed convex cone
\[
N_C(\bar x):=\{p\in\reals^J\mid \bar x\cdot p\ge x\cdot p\quad\forall x\in C\}.
\]
We will denote the conjugate and the superdifferential of the function $D_i(x^0_i;\cdot)$ by $D_i^*(x^0_i;\cdot)$ and $\partial D_i(x^0_i;x_i)$, respectively; see \eqref{eq:conj} and \eqref{eq:sg}.

\begin{lemma}\label{lem:D2}
The conjugate of $D_i(x^0_i;\cdot)$ can be expressed as
\[
D_i^*(x^0_i;p) =
\begin{cases}
  -\sigma_{C_i(x_i^0)}(-p) & \text{if $g\cdot p=1$},\\
  -\infty & \text{otherwise}.
\end{cases}
\]
Under Assumption~\ref{ass:g},
\[
\partial D_i(x^0_i;x_i) = \{p\in\reals^J\mid -p\in N_{C_i(x_i^0)}(x_i-D_i(x^0_i;x_i)g),\ g\cdot p=1\}
\]
for any $x^0_i\in\dom u_i$. 
\end{lemma}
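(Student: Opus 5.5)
The conjugate formula will come from a direct computation that substitutes the representation of $D_i$ through the set $C:=C_i(x^0_i)$. The superdifferential formula will then follow from the Fenchel--Young characterization of supergradients combined with the translation property in Lemma~\ref{lem:D}.

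\emph{Conjugate.} By \eqref{eq:conj},
\[
D_i^*(x^0_i;p)=\inf_{x}\{x\cdot p-D_i(x^0_i;x)\}=\inf_{x,r}\{x\cdot p-r\mid x-rg\in C\}.
\]
Change variables to $w=x-rg\in C$, so that $x=w+rg$. This turns the infimum into
\[
\inf_{w\in C,\,r\in\reals}\{w\cdot p+r(g\cdot p-1)\}.
\]
We check that $C\neq\emptyset$: it contains $0$, since $x^0_i\in\dom u_i$. If $g\cdot p\neq1$, letting $r\to\pm\infty$ drives the expression to $-\infty$. If $g\cdot p=1$, the $r$-term vanishes and the infimum equals $\inf_{w\in C}w\cdot p=-\sigma_C(-p)$. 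This part needs no assumptions and no attainment.

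\emph{Superdifferential.} By Lemma~\ref{lem:D}, $D_i(x^0_i;\cdot)$ is proper, concave and usc, and $D_i(x^0_i;x_i)$ is finite at any point where the superdifferential is nonempty. In that situation $p\in\partial D_i(x^0_i;x_i)$ if and only if the Fenchel--Young equality
\[
D_i(x^0_i;x_i)+D_i^*(x^0_i;p)=x_i\cdot p
\]
holds, which follows directly from \eqref{eq:sg} and \eqref{eq:conj}. We first show that any supergradient satisfies $g\cdot p=1$. Testing \eqref{eq:sg} with $x_i+\alpha g$ and using the translation identity of Lemma~\ref{lem:D} gives $\alpha\le\alpha\, g\cdot p$ for all $\alpha\in\reals$, hence $g\cdot p=1$; this is also consistent with $D_i^*$ being finite. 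Under $g\cdot p=1$, the equality becomes
\[
x_i\cdot p-D_i(x^0_i;x_i)=\inf_{w\in C}w\cdot p .
\]
Set $\bar w:=x_i-D_i(x^0_i;x_i)g$. Since $g\cdot p=1$, the left side equals $\bar w\cdot p$. The supremum in the definition of $D_i$ is attained when finite (Lemma~\ref{lem:D}), so $\bar w\in C$.

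Consequently, $p$ is a supergradient if and only if $\bar w\cdot p\le w\cdot p$ for all $w\in C$, that is, $-p\in N_C(\bar w)$. Conversely, if $g\cdot p=1$ and $-p\in N_C(\bar w)$, we read the same chain backwards to obtain Fenchel--Young and hence $p\in\partial D_i(x^0_i;x_i)$. This uses $\bar w\in C$ so that the normal cone is taken at a point of $C$.

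The main point requiring care is the case distinction on whether $D_i(x^0_i;x_i)$ is finite. If $D_i(x^0_i;x_i)=-\infty$, the superdifferential is empty, and the right-hand set should be read as empty, since $\bar w$ is undefined or not in $C$. If $D_i(x^0_i;x_i)=+\infty$, this cannot happen for a proper concave function. The other essential use of the hypotheses is the attainment from Lemma~\ref{lem:D}, which guarantees $\bar w\in C$, and it is there that Assumption~\ref{ass:g} enters.
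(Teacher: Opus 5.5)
Your proposal is correct and follows the paper's proof essentially step for step. You use the same change of variables $w=x-rg$ for the conjugate, and then the Fenchel--Young characterization together with the conjugate formula and the attainment from Lemma~\ref{lem:D} to identify supergradients with $-p\in N_{C_i(x^0_i)}(x_i-D_i(x^0_i;x_i)g)$, $g\cdot p=1$. Your direct translation argument for $g\cdot p=1$ and your explicit treatment of the case $D_i(x^0_i;x_i)=-\infty$ are harmless refinements of details the paper leaves implicit.
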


\begin{proof}
As to the conjugate,
\begin{align*}
D_i^*(x^0_i;p) &= \inf_{x_i}\{x_i\cdot p - D_i(x^0_i;x_i)\}\\
&= \inf_{x_i,r}\{x_i\cdot p - r\mid x_i-rg\in C_i(x_i^0)\}\\
&= \inf_{w_i,r}\{w_i\cdot p - r(1-g\cdot p)\mid w_i\in C_i(x_i^0)\},
\end{align*}
which equals $-\sigma_{C_i(x_i^0)}(-p)$ if $g\cdot p=1$ and $-\infty$ otherwise.

Let $x^0_i\in\dom u_i$. By definition of a supergradient, we have $p\in\partial D_i(x^0_i;x_i)$ if and only if
\[
D_i(x^0_i;x_i)+D_i^*(x^0_i;p)=x_i\cdot p.
\]
Using the expression for the conjugate, this can be written as
\begin{align*}
  D_i(x^0_i;x_i)-\sigma_{C_i(x_i^0)}(-p) &= x_i\cdot p,\label{eq:fe}\\
  g\cdot p &= 1.\nonumber
\end{align*}
Since $g\cdot p=1$, the first equation can be written as
\[
\sup_{x_i'\in C_i(x_i^0)}(x_i'- x_i+D_i(x^0_i;x_i)g)\cdot (-p)=0.
\]
Since, by Lemma~\ref{lem:D}, the supremum in the definition of $D_i$ is attained, we have $x_i-D_i(x^0_i;x_i)g \in C_i(x_i^0)$, so the previous equation means that
\[
(x_i'- x_i+D_i(x^0_i;x_i)g)\cdot (-p)\le 0\quad\forall x'\in C_i(x_i^0).
\]
In other words,
\[
-p\in N_{C_i(x_i^0)}(x_i-D_i(x^0_i;x_i)g),
\]
which completes the proof.
\end{proof}

\begin{remark}[Cash and redundant trades]
If $\sum_{i\in I}\alpha_i=0$, then by Lemma~\ref{lem:D}, the portfolios $z'_i:=z_i+\alpha_i g$ satisfy
\[
\sum_{i\in I}D_i(x^0_i;z'_i) = \sum_{i\in I}D_i(x^0_i;z_i)
\]
and
\[
\sum_{i\in I}z'_i=\sum_{i\in I}z_i.
\]
In particular, the market clearing trades in \eqref{da} are not unique. The portfolio updates are, however, independent of $\alpha$. Indeed, since $g\cdot p=1$, by Lemma~\ref{lem:D2},
\begin{align*}
x^1_i &= x^0_i + z'_i - (z'_i\cdot p)g\\
&= x^0_i + z_i+\alpha_ig - [z_i\cdot p+\alpha_i]g\\
&= x^0_i + z_i - (z_i\cdot p)g.
\end{align*}
We can eliminate the degeneracy, by requiring that the market clearing trades $z$ satisfy $g\cdot z_i=0$ for all $i\in I$. Indeed, if $z$ is feasible in \eqref{da}, then
\[
z'_i:=z_i-\frac{z_i\cdot g}{|g|^2}g
\]
is feasible, has the same objective value and satisfies $g\cdot z'_i=0$. Economically, the additional condition simply means that the agents won't submit offers to trade the numeraire. In particular, if the numeraire portfolio $g$ is the unit vector with one unit of cash and zero in all other assets, then $z'_{i,j}=z_{i,j}$ for all $j\in J$ except for cash for which $z'_{i,j}=0$. 
\end{remark}

By Lemma~\ref{lem:D2}, the dual problem \eqref{dual} can be written as
\begin{equation}\label{dual2}
\begin{aligned}
  &\minimize\quad & &\sum_{i\in I}\sigma_{C_i(x^0_i)}(-p) \quad\ovr\ p\in\reals^J\\
  &\st\quad & & p\cdot g=1.
\end{aligned}
\end{equation}
By Theorem~\ref{thm:kkt1}, the market-clearing prices are solutions of the dual problem. This will be used in Section~\ref{sec:ebdd} to establish the boundedness of market-clearing prices in repeated double auctions.

In general, our model allows for negative positions as well as negative prices. However, if even one agent has a utility function that is nondecreasing in positive directions then the prices are necessarily positive. More generally we have the following.

\begin{remark}
If $u_i$ is nondecreasing with respect to a cone $K_i\subset\reals^J$, then so too is $D_i$ so that, by Remark~\ref{lem:cone}, market-clearing prices satisfy $p\in K_i^*$.
\end{remark}

\section{Double auction equilibria and welfare theorems}\label{sec:equilibrium}

An {\em allocation-price pair} $(\bar x,\bar p)$ is said to be a {\em Walras equilibrium} (or a ``competitive equilibrium'') for the initial allocation $x^0$ if $\bar x$ is a feasible reallocation of $x^0$ and
\begin{equation}\label{eq:argmax}
\bar x_i\in\argmax_{x_i\in\reals^J}\{u_i(x_i)\,|\,\bar p\cdot x_i\le\bar p\cdot x^0_i\}
\end{equation}
for every agent $i\in I$. An allocation $\bar x\in\reals^{I\times J}$ is said to be {\em Pareto optimal} (or Pareto efficient) if there is no feasible reallocation $x$ of $\bar x$ such that $u_i(x_i)\ge u_i(\bar x_i)$ for all $i\in I$ with a strict inequality for at least one $i\in I$.

The 1st fundamental theorem of welfare economics gives conditions under which the allocation component of a Walras equilibrium is Pareto optimal. The 2nd theorem gives conditions under which a Pareto optimal allocation can be paired with a price vector to form a Walras equilibrium; see e.g.~\cite{var92,mwg95}. This section shows that Pareto optimality and Walras equilibria are intimately related to equilibria in double auction markets. We say that an allocation $\bar x\in\reals^{I\times J}$ is a {\em double auction equilibrium} if the double auction in \eqref{da} clears with zero trades. This happens precisely when the total consumer surplus is zero, i.e.
\[
CS(\bar x)=0;
\]
see \eqref{eq:cs} and Remark~\ref{rem:cs}. Interestingly, this is essentially the notion of ``equilibrium'' proposed by Allais, although he did not consider double auctions. The definition at the bottom of \cite[page~134]{all78} asks the surplus to be zero with respect to any asset as a numeraire; we have the fixed numeraire $g$.

By Theorem~\ref{thm:kkt1}, a sufficient condition for $CS(\bar x)=0$ is that there exist a vector $\bar p\in\reals^J$ such that
\[
\partial D_i(\bar x_i;0)\ni\bar p,
\]
for every agent $i\in I$, and then, $\bar p$ solves the dual problem \eqref{dual} or, equivalently, \eqref{dual2}. Indeed, by Theorem~\ref{thm:kkt1}, the above condition implies that $z=0$ solves the market clearing problem, so the optimum value is zero. The vector $\bar p$ gives the {\em market-clearing prices} for the double auction equilibrium.

Theorem~3 of \cite{pen22} gave sufficient conditions under which Pareto optimal allocations coincide with double auction equilibria. This section gives sufficient conditions under which an allocation-price pair $(\bar x,\bar p)$ is a Walras equilibrium if and only if $\bar x$ is a double auction equilibrium and, up to scaling, $\bar p$ is the corresponding market-clearing price; see Theorem~\ref{thm:ft} and Remark~\ref{rem:cheaper} below. The theorem also gives versions of the fundamental theorems of welfare economics, but without making the usual continuity assumption on the utility functions. Assumption~\ref{ass:g} on a strictly preferred numeraire asset allows us to get away with Assumptions~\ref{ass:4} and \ref{ass:uslater} below.

Instead of continuity of the utility functions, we will hold on to Assumption~\ref{ass:g} and the following, where the {\em interior} of a set $C\subset\reals^J$ is denoted by $\inte C$.

\begin{assumption}\label{ass:uslater}
\[
0\in\inte\sum_{i\in I}\dom D_i(\bar x_i;\cdot).
\]
\end{assumption}

Under Assumption~\ref{ass:g}, $C_i(\bar x_i) + \alpha g\subseteq C_i(\bar x_i)$ for all $\alpha>0$, so
\begin{equation}\label{eq:domD}
\dom D_i(\bar x_i;\cdot) = \bigcup_{\alpha\in\reals}[C_i(\bar x_i) + \alpha g] = \bigcup_{\alpha\le 0}[C_i(\bar x_i) + \alpha g].
\end{equation}

\begin{remark}
Assumption~\ref{ass:uslater} holds under Assumption~\ref{ass:g} if there is even one agent $i\in I$ whose utility function is continuous at $\bar x_i$.
\end{remark}

\begin{proof}
The continuity implies $\bar x_i\in\inte\dom u_i$, so there is an $\epsilon>0$ such that $\bar x_i+\epsilon g\in\inte\dom u_i$. Under Assumption~\ref{ass:g}, $u_i(\bar x_i+\epsilon g)> u_i(\bar x_i)$. By \cite[Theorem~10.1]{roc70a}, $u_i$ is continuous on $\inte\dom u_i$, so there is a neighbourhood $U$ of the origin such that
\[
u_i(\bar x_i+x_i+\epsilon g)\ge u_i(\bar x_i)\quad\forall x_i\in U.
\]
In other words, $U+\epsilon g\subset C_i(\bar x_i)$. By \eqref{eq:domD}, this implies $U\subset\dom D_i(\bar x_i;\cdot)$. Since $0\in\dom D_i(\bar x_i;\cdot)$ for all $i\in I$, this implies Assumption~\ref{ass:uslater}.
\end{proof}

The {\em relative interior} $\rinte C$ of a set $C$ is the interior of $C$ with respect to its affine hull; see \cite[Section~6]{roc70a}. The relative interior of a convex set is always nonempty, and it equals the interior whenever the latter is nonempty; see \cite[Theorem~6.2]{roc70a}. By \cite[Theorem~6.6]{roc70a}, Assumption~\ref{ass:uslater} thus implies
\begin{equation}\label{eq:slater}
0\in\inte\sum_{i\in I}\dom D_i(\bar x_i;\cdot) = \sum_{i\in I}\rinte\dom D_i(\bar x_i;\cdot),
\end{equation}
a condition that will be used in the proof of Theorem~\ref{thm:ft} below.

The following was introduced in \cite[Section~5]{pen22}.

\begin{assumption}\label{ass:4}
If $x_i$ is such that $u_i(x_i)>u_i(x_i')$ for some $x'_i\in\dom u_i$ then $x_i-\varepsilon g\in\dom u_i$ for small enough $\varepsilon>0$.
\end{assumption}

Assumption~\ref{ass:4} means that if a portfolio $x_i$ is not among the worst in an agent's feasible set $\dom u_i$, then the agent can give away a small amount of the numeraire and remain feasible. This can be seen as an extension of the ``indispensability'' condition of \cite{jrw23} where the domains of the utility functions were assumed to be contained in the positive orthant. Under Assumption~\ref{ass:4} a Walrasian ``quasi-equilibrium'' is a true Walrasian equilibrium; see e.g.~\cite[Section~16.D]{mwg95} and the last part of the proof of Theorem~\ref{thm:ft} below. Assumption~\ref{ass:4} allows also for a uniform version of the classical ``cheaper point'' condition; see Remark~\ref{rem:cheaper} below.

\begin{theorem}\label{thm:ft}
Given a feasible reallocation $\bar x$ of $x^0$ and a price vector $\bar p$, consider the following properties:
\begin{enumerate}
\item[(a)]
  $(\bar x,\bar p)$ is a Walras equilibrium for the initial allocation $\bar x$ (``equilibrium with transfers'');
\item[(b)]
  $\bar x$ is Pareto optimal;
\item[(c)]
  $\bar x$ is a double auction equilibrium;
\item[(d)]
  $\bar x$ is a double auction equilibrium with market-clearing price $\bar p$.
\end{enumerate}
Assume that $\bar x_i\in\dom u_i$ for all $i\in I$ and that Assumption~\ref{ass:g} holds. We have (a) $\Rightarrow$ (b) $\Rightarrow$ (c). Under Assumption~\ref{ass:uslater}, (c) implies the existence of a price vector $\bar p$ such that (d) holds. Under Assumption~\ref{ass:4}, (d) $\Rightarrow$ (a).
\end{theorem}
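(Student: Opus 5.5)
I will prove the four implications separately. Throughout I use formulation \eqref{eq:opt} of the market clearing problem at the allocation $\bar x$, together with the characterization $CS(\bar x)=0$ of a double auction equilibrium.

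\emph{(a) $\Rightarrow$ (b).} This is the standard first welfare theorem argument, with Assumption~\ref{ass:g} replacing local nonsatiation. Suppose $x$ is a feasible reallocation of $\bar x$ with $u_i(x_i)\ge u_i(\bar x_i)$ for all $i$, strictly for some $i_0$. For $i_0$, optimality in \eqref{eq:argmax} forces $\bar p\cdot x_{i_0}>\bar p\cdot\bar x_{i_0}$. For any other $i$, suppose $\bar p\cdot x_i<\bar p\cdot\bar x_i$. The point $x_i\in\dom u_i$, and if $\bar p\cdot g\neq0$, adding a small multiple $rg$ keeps the point affordable while strictly raising utility by Assumption~\ref{ass:g}. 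This contradicts \eqref{eq:argmax}, so $\bar p\cdot x_i\ge\bar p\cdot\bar x_i$. Summing over $i$ gives $\bar p\cdot\sum x_i>\bar p\cdot\sum\bar x_i$, which contradicts feasibility.

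The case $\bar p\cdot g=0$ needs separate handling. There, Assumption~\ref{ass:g} shows directly that the budget problem has no maximizer, since $\bar x_i+g$ is affordable and strictly better. So (a) already forces $\bar p\cdot g\neq0$. Likewise $\bar p\cdot g<0$ is impossible, because then $\bar x_i+g$ is cheaper and better. Hence $\bar p\cdot g>0$.

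\emph{(b) $\Rightarrow$ (c).} Suppose $CS(\bar x)>0$. Then \eqref{eq:opt} has a feasible $(r,w)$ with $r>0$, $\sum w_i+rg=\sum\bar x_i$, and $u_i(w_i)\ge u_i(\bar x_i)$. Set $x_i:=w_i+\frac{r}{|I|}g$. This $x$ is a feasible reallocation, and by Assumption~\ref{ass:g} it gives $u_i(x_i)>u_i(w_i)\ge u_i(\bar x_i)$ for every $i$, contradicting Pareto optimality. Since $r=0$ with $w=\bar x$ is always feasible, $CS(\bar x)\ge0$, and therefore $CS(\bar x)=0$.

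\emph{(c) $\Rightarrow$ (d) under Assumption~\ref{ass:uslater}.} The task is to produce a dual solution with no duality gap. By \eqref{eq:slater}, the constraint qualification $0\in\sum_i\rinte\dom D_i(\bar x_i;\cdot)$ holds. By the appendix duality results (Theorem~\ref{thm:kkt2}), the optimal value of \eqref{da} is then attained in the dual and equals the dual value. Since $z=0$ is optimal with value $0$, a multiplier $\bar p$ exists with $\bar p\in\partial D_i(\bar x_i;0)$ for every $i$, which is exactly (d). The only real work is matching the hypotheses of the appendix theorem, namely properness and usc of each $D_i(\bar x_i;\cdot)$ (Lemma~\ref{lem:D}) and finiteness of the optimal value.

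\emph{(d) $\Rightarrow$ (a) under Assumption~\ref{ass:4}.} This is the main obstacle, the quasi-equilibrium-to-equilibrium step. By Lemma~\ref{lem:D2} with $x_i=0$ and $D_i(\bar x_i;0)=0$, we have $g\cdot\bar p=1$ and $-\bar p\in N_{C_i(\bar x_i)}(0)$. This says that $w\cdot\bar p\ge0$ whenever $u_i(\bar x_i+w)\ge u_i(\bar x_i)$, i.e.\ every weakly preferred bundle costs at least $\bar p\cdot\bar x_i$.

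Now let $x_i$ satisfy $\bar p\cdot x_i\le\bar p\cdot\bar x_i$ and suppose $u_i(x_i)>u_i(\bar x_i)$. Then $\bar x_i\in\dom u_i$ is a worse point, so Assumption~\ref{ass:4} gives $x_i-\varepsilon g\in\dom u_i$ for small $\varepsilon>0$. By concavity and usc of $u_i$ along the segment, $u_i(x_i-\varepsilon g)\ge u_i(\bar x_i)$ for small enough $\varepsilon$. I expect to need care here: usc alone gives the wrong direction, so I would instead use concavity on the segment between $x_i-\varepsilon_0 g$ and $x_i$, whose restriction is a finite concave function on an interval and hence continuous in its interior. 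With such an $\varepsilon$, the bundle $x_i-\varepsilon g$ is weakly preferred to $\bar x_i$ but strictly cheaper, since its cost is $\bar p\cdot x_i-\varepsilon\le\bar p\cdot\bar x_i-\varepsilon$. This contradicts the normal cone inequality. Hence $\bar x_i$ attains the maximum in \eqref{eq:argmax} with initial allocation $\bar x$, and (a) holds.
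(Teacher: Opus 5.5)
Your proof is correct and follows the paper's argument in all four steps. You use the same perturbation by $g$ for (a)$\Rightarrow$(b), the same reallocation of the surplus $rg$ for (b)$\Rightarrow$(c) (you split it among all agents rather than giving it to one), the same appeal to \eqref{eq:slater} and Theorem~\ref{thm:kkt2} for (c)$\Rightarrow$(d), and the same normal-cone plus Assumption~\ref{ass:4} contradiction for (d)$\Rightarrow$(a).

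Two small remarks. First, the case split on the sign of $\bar p\cdot g$ is unnecessary, because $\bar p\cdot x_i<\bar p\cdot\bar x_i$ is strict. Second, at the endpoint $x_i$ what you actually need is the chord inequality $u_i(x_i-\lambda\varepsilon_0 g)\ge(1-\lambda)u_i(x_i)+\lambda u_i(x_i-\varepsilon_0 g)$, rather than continuity on the interior of $[x_i-\varepsilon_0 g,x_i]$. Alternatively, note that $x_i+\varepsilon_0 g\in\dom u_i$ by Assumption~\ref{ass:g}, so $x_i$ is interior to a segment in $\dom u_i$.
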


\begin{proof}
Assume (a). We have $\bar p\cdot x_i>\bar p\cdot\bar x_i$ for every $x_i$ with $u_i(x_i)>u_i(\bar x_i)$, while Assumption~\ref{ass:g} implies that $\bar p\cdot x_i\ge\bar p\cdot\bar x_i$ for every $x_i$ with $u_i(x_i)\ge u_i(\bar x_i)$. Indeed, if there was an $x_i$ with $\bar p\cdot x_i<\bar p\cdot\bar x_i$ and $u_i(x_i)\ge u_i(\bar x_i)$, then for small enough $\epsilon>0$ the portfolio $x'_i:=x_i+\epsilon g$ would satisfy $\bar p\cdot x_i'\le \bar p\cdot\bar x_i$ and $u_i(x_i')>u_i(\bar x_i)$, in contradiction to the assumption that $(\bar x,\bar p)$ be a Walras equilibrium. It follows that, if $x$ was such that $u_i(x_i)\ge u_i(\bar x_i)$ with a strict inequality for some $i\in I$, we would have
\[
\bar p\cdot\sum_{i\in I}x_i > \bar p\cdot\sum_{i\in I}\bar x_i
\]
so $x$ could not be feasible. Thus, (b) holds.

If $\bar x$ does not satisfy (c), then formulation \eqref{eq:opt} of the market clearing problem implies the existence of an $r>0$ such that the constraints of problem \eqref{eq:opt} are satisfied. We can then construct another feasible allocation by giving $rg$ to one of the agents. Under Assumption~\ref{ass:g}, this would lead to strict increase of the agent's utility so $\bar x$ can't be Pareto. Thus, (b) implies (c).

Assume (c). Since Assumption~\ref{ass:uslater} implies \eqref{eq:slater}, Theorem~\ref{thm:kkt2} gives the existence of a price vector $\bar p\in\reals^J$ such that
\begin{equation}\label{eq:dae}
\partial D_i(\bar x_i;0)\ni\bar p\quad \forall i\in I,
\end{equation}
which means that (d) holds.

Assume (d) so that \eqref{eq:dae} holds. Under Assumption~\ref{ass:g}, Lemma~\ref{lem:D} gives $D_i(\bar x_i;0)=0$ so, by Lemma~\ref{lem:D2}, condition \eqref{eq:dae} can be written as
\begin{align*}
  N_{C_i(\bar x_i)}(0)&\ni-\bar p\quad \forall i\in I,\\
  g\cdot\bar p &=1
\end{align*}
which means that
\begin{align*}
  x_i\cdot\bar p &\ge 0\quad\forall x_i\in C_i(\bar x_i),\quad \forall i\in I\\
  g\cdot\bar p &=1,
\end{align*}
where the first condition can be written more explicitly as
\begin{equation}\label{eq:argmin}
\bar x_i\in\argmin\{x_i\cdot\bar p\mid u_i(x_i)\ge u_i(\bar x_i)\}.
\end{equation}
Assumption~\ref{ass:4} now implies (a). Indeed, assume for contradiction, that there exists an $x_i\in\reals^J$ with $u_i(x_i)>u_i(\bar x_i)$ and $x_i\cdot\bar p\le\bar x_i\cdot\bar p$. Assumption~\ref{ass:4} then gives the existence of an $\varepsilon>0$ such that $x_i-\epsilon g\in\dom u_i$. Since $u_i$ is concave, we can find an $\varepsilon'\in(0,\epsilon)$ with $u_i(x_i-\varepsilon'g)$ arbitrarily close to $u_i(x_i)$ and, in particular with $u_i(x_i-\varepsilon'g)\ge u_i(\bar x_i)$. On the other hand, since $g\cdot\bar p=1$, we would get
\[
(x_i-\varepsilon'g)\cdot\bar p = x_i\cdot\bar p - \varepsilon' \le\bar x_i\cdot\bar p - \varepsilon',
\]
contradicting \eqref{eq:argmin}.
\end{proof}

The implication (a) $\Rightarrow$ (b) in Theorem~\ref{thm:ft} is known as the 1st fundamental theorem of welfare economics; see e.g.~\cite{var92,mwg95}. Inspecting the above proof, it is clear that the implication holds under the ``nonsatiability'' condition: {\em For any $x_i\in\dom u_i$ and $\epsilon>0$, there is an $x_i'\in\reals^J$ with $|x_i'-x_i|\le\epsilon$ and $u_i(x_i')>u_i(x_i)$.} This is even weaker than Assumption~\ref{ass:g} which simply means that the numeraire is strictly preferred by all.

The 2nd fundamental theorem, says that (b) implies the existence of a price vector $\bar p$ such that (a) holds.  It is usually proven under the assumption that the utility functions be continuous at the Pareto point $\bar x$; see e.g.~\cite[Section~17.7]{var92}. Theorem~\ref{thm:ft} gives the same implication under Assumptions~\ref{ass:g}, \ref{ass:4} and \ref{ass:uslater}. This does not require continuity. Lemma~\ref{lem:slater} below gives a condition that implies that Assumption~\ref{ass:uslater} holds for any $\bar x$.

Many treatments of the welfare theorems assume in addition that the utility functions are nondecreasing in the directions of the positive orthant $\reals^J_+$ and that the domains of the utility functions are contained in $\reals^J_+$. Such assumptions were not used above. This allows, in particular, for markets where the agents may take short positions and some assets may turn out to have negative prices. 

The implication (b) $\Rightarrow$ (c) in Theorem~\ref{thm:ft} was observed already in \cite[Theorem~3]{pen22}. That result shows also that the converse implication holds under Assumption~\ref{ass:4}.

The following gives the implication (a) $\Rightarrow$ (d) without invoking Assumptions~\ref{ass:4} or \ref{ass:uslater}. Instead, it uses an assumption that depends on the equilibrium prices~$\bar p$.

\begin{remark}[Cheaper point-condition]\label{rem:cheaper}
Let Assumption~\ref{ass:g} hold and let $(\bar x,\bar p)$ be a Walras equilibrium for the initial allocation $\bar x$. If, for every $i\in I$, there exist $x_i\in\dom u_i$ with $x_i\cdot\bar p<\bar x_i\cdot\bar p$, then $\bar x$ is a double auction equilibrium with market-clearing price
\[
\bar p':=\bar p/(g\cdot\bar p).
\]
Under Assumption~\ref{ass:4}, this condition holds independently of $\bar p$, if for every $i\in I$, there exists an $x'_i\in\dom u_i$ such that $u_i(\bar x_i)>u_i(x'_i)$.
\end{remark}

\begin{proof}
Indeed, Walras equilibrium means that each $\bar x_i$ solves
\begin{equation}\label{eq:proof}
\begin{aligned}
  &\maximize\quad & & u_i(x_i)\quad\ovr\ x_i\in\reals^J\\
  &\st\quad & & x_i\cdot\bar p\le \bar x_i\cdot\bar p.
\end{aligned}
\end{equation}
Assumption~\ref{ass:g} implies that we must have $g\cdot\bar p>0$. By scaling $\bar p$ if necessary, we can assume that $g\cdot\bar p=1$. By Theorem~\ref{thm:kkt2}, the second condition guarantees the existence of Lagrange multipliers $\lambda_i\ge 0$ such that
\[
\bar x_i\in\argmax_{x_i\in\reals^J}\{u_i(x_i)-\lambda_ix_i\cdot\bar p\}.
\]
Since the functions $u_i$ are strictly increasing in $g$, we must have $\lambda_i>0$, so the condition can be written as
\[
\bar x_i\in\argmax_{x_i\in\reals^J}\{\frac{1}{\lambda_i}u_i(x_i)-x_i\cdot\bar p\}.
\]
By Theorem~\ref{thm:kkt1}, this implies that $\bar x$ satisfies \eqref{eq:argmin}. As observed in the proof of Theorem~\ref{thm:ft}, this means that $\bar x$ is a double auction equilibrium with market-clearing price $\bar p'$.

As to the last claim, assume that $u_i(\bar x_i)>u_i(x'_i)$. Assumption~\ref{ass:4} gives the existence of an $\epsilon>0$ such that $\bar x_i-\epsilon g\in\dom u_i$. Under Assumption~\ref{ass:g}, one must have $g\cdot\bar p>0$ at a Walras equilibrium $(\bar x,\bar p)$, so $(\bar x_i-\epsilon g)\cdot\bar p<\bar x_i\cdot\bar p$. 
\end{proof}

\section{Dynamic double auction markets}\label{sec:ebdd}

In general, the agent's demand functions $D_i$ depend on their endowments $x^0_i$. After market clearing, the endowments are changed to $x^1_i = x^0_i + z^0_i - (p\cdot z^0_i)g$ so the demands may change too. There is no reason for the new allocation $x^1$ to be a double auction equilibrium, in general.

From now on, we will study a dynamic double auction market where, each market clearing is followed by another bidding period and market clearing. We will denote the market clearing trades in the $t$th auction by $z^t\in\reals^{I\times J}$, the corresponding market-clearing prices by $p^t\in\reals^J$ and the resulting allocations by $x^t=(x^t_i)_{i\in I}$. Thus, when they exist, the sequences satisfy
\begin{equation}\label{eq:dasd}
\begin{aligned}
  \partial D_i(x^t_i;z^t_i)&\ni p^t,\\
  \sum_{i\in I} z^t_i&=0,\\
  x^{t+1}_i &= x^t_i + z^t_i - (z^t_i\cdot p^t)g.
\end{aligned}
\end{equation}
Note that the double auction does not define the sequences uniquely since the market clearing trades and prices need not be unique; see Figure~\ref{fig:double-auction} for a simple example. It is also conceivable that the market ends up in an allocation $x^t$ at which the next market clearing problem does not have market clearing trades or prices. This section gives sufficient conditions which guarantee that, no matter which market clearing trades $z^t$ and prices $p^t$ are chosen at the $t$th auction, the resulting allocation $x^{t+1}$ is such that, in the next market clearing problem, both $z^{t+1}$ and $p^{t+1}$ exist.

\begin{remark}[Continuous vs.\ call auctions]\label{rem:cda}
  When the market clearing frequency is high enough, the repeated double auction market studied here approximates continuous-time double auctions implemented in most securities exchanges. Indeed, orders submitted by agents who are not involved in market-clearing trades remain valid in the next round, since their holdings, and hence their indifference prices, do not change. It can be shown that the same applies to agents whose orders are executed at their indifference prices. If the market is cleared frequently enough, one may assume, as in continuous trading, that each incoming order is immediately matched or entered in the limit order book. If the size of an incoming order doesn't exceed the quantity available at the best price levels, the dynamics coincide with those of a continuous-time double auction market. 
Continuous double auctions will be addrerssed in future work.
\end{remark}

Recall from Section~\ref{sec:gd} that the market-clearing prices $p^t$ in \eqref{eq:dasd} solve the dual problem \eqref{dual2}. This observation will be employed below to show that, when defined, the sequence $(p^t)_{t=0}^\infty$ of market-clearing prices remain bounded. We will use the following.

\begin{assumption}\label{ass:bdd}
The utility functions $u_i$ satisfy
\begin{gather}
u_i^\infty(x_i)\ge 0,\ \sum_{i\in I}x_i=0\implies x=0,\label{ass:primalbdd}\\
g\in\inte\sum_{i\in I}\lev_0 u_i^\infty.\label{ass:dualbdd}
\end{gather}
\end{assumption}

The first condition in Assumption~\ref{ass:bdd} holds, in particular, if $u_i^\infty(x_i)<0$ for $x_i\notin\reals^J_+$. This, as well as condition \eqref{ass:dualbdd}, holds e.g.\ for the Cobb-Douglas utilities when $g\in\inte\reals^J_+$.

We assume throughout that $x^0_i\in\dom u_i$ for every agent $i\in I$. If \eqref{eq:dasd} holds, then $x^t_i\in\dom u_i$ for all $t$ since, by Remark~\ref{rem:cs}, the double auction makes Pareto improvements. 

\begin{lemma}\label{lem:slater}
Under Assumption~\ref{ass:g}, condition~\ref{ass:dualbdd} implies that Assumption~\ref{ass:uslater} holds for every $\bar x$ with $\bar x_i\in\dom u_i$ for all $i\in I$.
\end{lemma}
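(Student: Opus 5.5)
The goal is to show $0\in\inte\sum_{i\in I}\dom D_i(\bar x_i;\cdot)$. The approach is to relate each domain to the recession cone $\lev_0 u_i^\infty$, and then turn the interiority of $g$ in \eqref{ass:dualbdd} into interiority of the origin by shifting along $-g$.

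\textbf{Step 1: each domain contains a shifted recession cone.} Fix $i$. Since $\bar x_i\in\dom u_i$, we have $0\in C_i(\bar x_i)$. Assumption~\ref{ass:g} gives $u_i(\bar x_i)<\sup u_i$, so Lemma~\ref{lem:D} yields
\[
C_i(\bar x_i)^\infty=\lev_0 u_i^\infty .
\]
Because $C_i(\bar x_i)$ is closed and convex (by upper semicontinuity and concavity of $u_i$) and contains $0$, it contains its own recession cone: $0+\alpha x\in C_i(\bar x_i)$ for every $x$ in the cone and every $\alpha>0$. Hence $\lev_0 u_i^\infty\subseteq C_i(\bar x_i)$. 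By \eqref{eq:domD}, taking $\alpha=0$ and $\alpha\le 0$,
\[
\lev_0 u_i^\infty + \alpha g\subseteq \dom D_i(\bar x_i;\cdot)\quad\forall\alpha\le 0 .
\]

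\textbf{Step 2: shift the sum by $-g$.} Summing over $i$ and choosing $\alpha=-1$ for a single agent and $\alpha=0$ for the others gives
\[
\sum_{i\in I}\lev_0 u_i^\infty - g\subseteq\sum_{i\in I}\dom D_i(\bar x_i;\cdot).
\]
By \eqref{ass:dualbdd} there is a neighbourhood $U$ of the origin with $g+U\subseteq\sum_i\lev_0 u_i^\infty$. Therefore
\[
U\subseteq\sum_{i\in I}\lev_0 u_i^\infty - g\subseteq \sum_{i\in I}\dom D_i(\bar x_i;\cdot),
\]
so the origin lies in the interior of the right-hand sum, which is Assumption~\ref{ass:uslater}.

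\textbf{Main point of care.} The delicate step is the inclusion $\lev_0 u_i^\infty\subseteq C_i(\bar x_i)$. It requires three things: the identification of the recession cone from Lemma~\ref{lem:D}, which uses that $u_i(\bar x_i)$ is not the supremum and is guaranteed by Assumption~\ref{ass:g}; the closedness of $C_i(\bar x_i)$; and the fact that the base point $0$ lies in $C_i(\bar x_i)$. The remaining steps are direct set algebra.
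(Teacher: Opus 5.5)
Your proposal is correct and is essentially the paper's own argument: the paper likewise uses $0\in C_i(\bar x_i)$ and the identification $C_i(\bar x_i)^\infty=\lev_0 u_i^\infty$ (via Rockafellar's Theorem~8.7, which also underlies Lemma~\ref{lem:D}) to get $\lev_0 u_i^\infty\subseteq C_i(\bar x_i)$, deduces $g\in\inte\sum_{i\in I}C_i(\bar x_i)$ from \eqref{ass:dualbdd}, and then appeals to \eqref{eq:domD}. The only difference is that you write out the final translation by $-g$, applied to a single agent, which the paper leaves implicit.
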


\begin{proof}
Let $\bar x$ be such that $\bar x_i\in\dom u_i$ for all $i\in I$. Since $0\in C_i(\bar x_i)$, we have $C_i(\bar x_i)^\infty\subseteq C_i(\bar x_i)$. Since Assumption~\ref{ass:g} implies $u_i(\bar x_i)<\sup u_i$, we have $C_i(\bar x_i)^\infty=\lev_0u^\infty_i$, by \cite[Theorem~8.7]{roc70a}, so $C_i(\bar x_i)\supseteq \lev_0u^\infty_i$. Condition \eqref{ass:dualbdd} thus implies
\[
g\in\inte\sum_{i\in I}C_i(\bar x_i).
\]
By equation \eqref{eq:domD}, this implies Assumption~\ref{ass:uslater}.
\end{proof}

By the above lemma, condition \ref{ass:dualbdd} allows us to invoke Theorem~\ref{thm:kkt2} on the existence of the price vector, independently of the current allocation $x^t$.



\begin{theorem}\label{thm:bdd}
Under Assumptions~\ref{ass:g} and \ref{ass:bdd}, the sequences of allocations and prices generated by repeated double auction exist and are bounded, independently of the choice of the market clearing trades and prices.
\end{theorem}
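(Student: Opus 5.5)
We argue by induction on $t$, showing simultaneously that at each step the market clearing problem \eqref{da} (equivalently \eqref{eq:opt}) with current allocation $x^t$ admits market clearing trades $z^t$ and prices $p^t$, and that every possible $x^{t+1}$ stays in a fixed bounded set independent of the choices made. Throughout, $x^t_i\in\dom u_i$ by Remark~\ref{rem:cs}. Existence of $p^t$ is immediate from Lemma~\ref{lem:slater}, which gives Assumption~\ref{ass:uslater} at $x^t$. Theorem~\ref{thm:kkt2} then yields a solution of the dual problem \eqref{dual2} satisfying \eqref{kkt}, provided the primal optimum is attained. So the work splits into attainment of the primal, boundedness of the allocations, and boundedness of the prices.

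\textbf{Allocations.} By Remark~\ref{rem:cs}, $u_i(x^t_i)\ge u_i(x^0_i)$, and $\sum_i x^t_i=\sum_i x^0_i$. Hence every $x^t$ lies in
\[
A:=\{x\in\reals^{I\times J}\mid u_i(x_i)\ge u_i(x^0_i)\ \forall i,\ \textstyle\sum_i x_i=\sum_i x^0_i\}.
\]
This set is closed and convex by upper semicontinuity and concavity. Its recession cone is $\{x\mid u_i^\infty(x_i)\ge 0,\ \sum_i x_i=0\}$, because each $\lev_{u_i(x^0_i)}u_i$ has recession cone $\lev_0 u_i^\infty$ (\cite[Theorem~8.7]{roc70a}, using $u_i(x^0_i)<\sup u_i$). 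Condition \eqref{ass:primalbdd} says this cone is $\{0\}$, so $A$ is bounded by \cite[Theorem~8.4]{roc70a}.

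\textbf{Primal attainment.} The same argument applies to \eqref{eq:opt} at $x^t$. Its feasible set in $(r,w)$ has recession directions $(r,w)$ with $u_i^\infty(w_i)\ge0$ and $\sum_i w_i+rg=0$. A direction of increase needs $r\ge 0$. Then $-rg=\sum_i w_i\in\sum_i\lev_0u_i^\infty$. If $r>0$, adding $rg$ to one $w_i$ (allowed, since $g\in\lev_0u_i^\infty$ by Assumption~\ref{ass:g}) produces a nonzero vector violating \eqref{ass:primalbdd}, unless it vanishes; in either case \eqref{ass:primalbdd} forces $r=0$ and $w=0$. Hence the optimum is finite and attained. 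Through the equivalence with \eqref{da} (\cite[Lemma~2]{pen22}), this gives $z^t$.

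\textbf{Prices.} This is the main obstacle. Since $p^t$ solves \eqref{dual2} at $x^t$, its value is at most the value at any fixed $\hat p$ with $g\cdot\hat p=1$. We use \eqref{ass:dualbdd}: there is $\epsilon>0$ with $g+\epsilon\uball\subset\sum_i\lev_0u_i^\infty\subseteq\sum_iC_i(x^t_i)$, the last inclusion as in the proof of Lemma~\ref{lem:slater}. For $p$ with $g\cdot p=1$,
\[
\sum_i\sigma_{C_i(x^t_i)}(-p)\ge\sigma_{\sum_i\lev_0u_i^\infty}(-p)\ge -1+\epsilon|p|.
\]
For an upper bound on the optimal value, $\sigma_{C_i(x^t_i)}(-p)=\sup\{-w\cdot p\mid u_i(x^t_i+w)\ge u_i(x^t_i)\}$. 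Since $u_i(x^t_i)\ge u_i(x^0_i)$, the vector $x^t_i+w$ lies in $\lev_{u_i(x^0_i)}u_i$. The difficulty is that this set need not be bounded, so bounding $\sigma$ uniformly is delicate. We instead bound the value through the primal: by strong duality, the dual optimal value equals $CS(x^t)\le CS(x^0)<\infty$ (Remark~\ref{rem:cs}). Thus $-1+\epsilon|p^t|\le CS(x^0)$, so $|p^t|\le(1+CS(x^0))/\epsilon$, uniformly in $t$ and in the choices made.
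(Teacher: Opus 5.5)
Your proof is correct and follows the paper's argument. The ingredients match step by step:
\begin{itemize}
\item recession-cone attainment for the primal problem (you run it on \eqref{eq:opt} rather than \eqref{da}, which avoids the lineality space of redundant $g$-trades);
\item Lemma~\ref{lem:slater} with Theorem~\ref{thm:kkt2} for existence of prices;
\item the same bounded individually rational set for the allocations;
\item the same minorization of the dual objective by $\sum_i\sigma_{\lev_0u_i^\infty}(-p)$ together with $CS(x^t)\le CS(x^0)$, where your explicit ball $g+\epsilon\uball$ replaces the paper's recession-cone and polar computation for $B$.
\end{itemize}

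The only loose spot is the case where the shifted vector vanishes, i.e.\ $w_k=-rg$. There \eqref{ass:primalbdd} does not apply to $w$ directly, since $w$ does not sum to zero. It is Assumption~\ref{ass:g}, which gives $u_k^\infty(-g)<0$, that rules out $r>0$, exactly as in the paper; alternatively, apply \eqref{ass:primalbdd} a second time with $rg$ given to another agent.
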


\begin{proof}
We prove the existence by induction. Assume that the price vector $p^{t-1}$ and the $t$th allocation $x^t$ is well-defined by the $t$th market clearing and corresponding exchange of the assets. By \cite[Theorem~27.1]{roc70a}, problem \eqref{da} admits a solution with $D_i=D_i(x^t_i;\cdot)$ provided the set
\[
C:=\{x\in\reals^{I\times J}\mid \sum_{i\in I}D_i^\infty(x^t_i;x_i)\ge 0,\ \sum_{i\in I}x_i=0\}
\]
is linear. By Lemma~\ref{lem:D},
\[
C:=\{x\in\reals^{I\times J}\mid \exists r_i\in\reals: \sum_{i\in I}r_i\ge 0,\ u_i^\infty(x_i-r_ig)\ge 0,\ \sum_{i\in I}x_i=0\}.
\]
Let $x\in C$ and $k\in I$. Defining, $x'_i:=x_i-r_ig$ for $i\ne k$ and $x'_k=x_k-r_kg+rg$, where $r:=\sum_{i\in I}r_i\ge 0$, we have
\[
u_i^\infty(x'_i)\ge 0,\ \sum_{i\in I}x'_i=0.
\]
Condition \eqref{ass:primalbdd} then forces $x'=0$ so that $x_i=r_ig$ for $i\ne k$ and $x_k=(r_k-r)g$. Since $x\in C$, we also have
\[
u_k^\infty(x_k-r_kg) = u_k^\infty(-rg)\ge 0. 
\]
Under Assumption~\ref{ass:g}, this can only hold if $r=0$. In summary,
\[
C:=\{x\in\reals^{I\times J}\mid \exists r_i\in\reals: \sum_{i\in I}r_i=0,\ x_i=r_ig\}.
\]
This set is clearly linear so problem \eqref{da} admits a solution $z^t$, by \cite[Theorem~27.1]{roc70a}.

Assumption~\ref{ass:uslater} holds, by Lemma~\ref{lem:slater}, so by Theorem~\ref{thm:kkt2}, dual solution $p^t$ exist and the optimum values of \eqref{da} and \eqref{dual} are equal. It now follows, by induction, that \eqref{eq:dasd} admits a solution $(x^t)_{t=0}^\infty$ and $(p^t)_{t=0}^\infty$ independently of the choice of the market clearing trades and prices.

It remains to show that the sequences are bounded. Since each $x^t$ is a feasible allocation and, by \cite[Remark~2]{pen22}, $u_i(x^{t+1}_i)\ge u_i(x^t_i)$ for all $i\in I$, the sequence $(x^t)_{t=0}^\infty$ is contained in the set
\[
C = \{x\in\reals^{I\times J}\mid \sum_{i\in I}x_i=\sum_{i\in I}x^0_i,\ u_i(x_i)\ge u_i(x_i^0)\ i\in I\}.
\]
By \cite[Theorem~8.4]{roc70a}, $C$ is bounded if and only if $C^\infty=\{0\}$. By \cite[Corollary~8.3.3]{roc70a} and \cite[Theorem~8.7]{roc70a},
\[
C^\infty = \{x\in\reals^{I\times J}\mid \sum_{i\in I}x_i=0,\ u_i^\infty(x_i)\ge 0\ i\in I\}
\]
which only contains the origin under Assumption~\ref{ass:bdd}.

As to the dual boundedness, we have $C_i(x^t_i)\supseteq \lev_0u^\infty_i$, as observed in the proof of Lemma~\ref{lem:slater}. The dual objective is thus minorized by
\[
\sigma(p):=\sum_{i\in I}\sigma_{\lev_0 u_i^\infty}(-p)
\]
independently of the allocation $x^t$. As already observed, the dual optimum value equals the optimum value $CS(x^t)$ of the market clearing problem. Since $CS(x^t)\le CS(x^0)$ (see Remark~\ref{rem:cs}), the sequence $(p^t)_{t=0}^\infty$ is contained in the set
\[
B:=\{p\in\reals^J\mid \sigma(p)\le CS(x^0),\ g\cdot p=1\}.
\]
By \cite[Theorem~8.4]{roc70a}, $B$ is bounded if and only if $B^\infty=\{0\}$. By \cite[Corollary~8.3.3]{roc70a} and \cite[Theorem~8.7]{roc70a},
\[
B^\infty=\{p\in\reals^J\mid \sigma(p)\le 0,\ g\cdot p=0\}.
\]
Since $0\in\lev_0 u_i^\infty$, we have $\sigma_{\lev_0 u_i^\infty}\ge 0$ so
\[
B^\infty=\{p\in\reals^J\mid \sigma_{\lev_0 u_i^\infty}(-p)\le 0\ \forall i\in I,\ g\cdot p=0\}.
\]
Since $\lev_0u_i^\infty$ is a cone, its support function equals the indicator function of the polar cone
\[
(\lev_0 u_i^\infty)^*:=\{p\in\reals^J\mid x_i\cdot p\le 0\ \forall x_i\in\lev_0u_i^\infty\},
\]
so
\[
B^\infty = \bigcap_{i\in I}(-\lev_0 u_i^\infty)^*\cap\{p\mid g\cdot p=0\}.
\]

Condition~\eqref{ass:dualbdd} implies that the set
\[
C:=\bigcup_{\alpha>0}\alpha(\sum_{i\in I}\lev_0 u_i^\infty-g) = \sum_{i\in I}\lev_0 u_i^\infty-\bigcup_{\alpha>0}\alpha g
\]
is the whole space $\reals^J$. Thus, the polar cone $C^\circ$ of $C$ reduces to the origin. A direct calculation gives
\[
C^\circ = \bigcap_{i\in I}(\lev_0 u_i^\infty)^*\cap\{p\mid g\cdot p\ge 0\},
\]
so $B^\infty\subseteq -C^\circ=\{0\}$.
\end{proof}

\section{Convergence to double auction equilibria}\label{sec:ra}

The following condition employed in \cite{pen22} is a quantitative strengthening of Assumption~\ref{ass:g}.

\begin{assumption}\label{ass:1}
The utility functions $u_i$ are upper semicontinuous, concave and, for every $r>0$ there exists a $\delta_i>0$ such that 
\[
u_i(x_i+r g)\ge u_i(x_i) + \delta_ir\quad\forall x_i\in\uball(r).
\]
\end{assumption}

Assumption~\ref{ass:1} is implied by Assumption~\ref{ass:g} if $u_i$ is finite everywhere, or more generally, if $x_i+rg\in\inte\dom u_i$ for all $x_i\in\cl\dom u_i$ and $r>0$. Concavity of $u_i$ implies that the difference quotient
\[
\frac{u_i(x_i+rg)-u_i(x_i)}{r}
\]
is nonincreasing in $r$ so the inequality in Assumption~\ref{ass:1} implies
\begin{equation}\label{eq:delta}
u_i(x_i+r'g)\ge u_i(x_i) + \delta_ir'\quad\forall r'\in(0,r].
\end{equation}

Recall that an allocation $\bar x$ is a double auction equilibrium if the market clears with zero trades, or equivalently, if the total consumer surplus $CS(x^t)$ is zero. As noticed in Remark~\ref{rem:cs}, $CS(x^t)$ is nonincreasing along the sequence $(x^t)_{t=0}^\infty$. An allocation $\bar x$ said to be {\em individually rational} if $u_i(\bar x_i)\ge u_i(x^0_i)$. The following was proved already in \cite{pen22}, but we repeat the proof to clarify some details that were left implicit in \cite{pen22}.

\begin{theorem}\label{thm:primalconv}
Let $(x^t,p^t)_{t=0}^\infty$ be a bounded sequence of allocation-price pairs generated by the repeated double auction market. Under Assumption~\ref{ass:1}, the total surplus $CS(x^t)$ is nonincreasing in $t$, there is a constant $\gamma\ge 0$ such that 
\[
CS(x^t)\le\frac{\gamma}{t}
\]
and all cluster points $\bar x$ of $(x^t)_{t=0}^\infty$ are individually rational double auction equilibria.
\end{theorem}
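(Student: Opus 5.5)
Monotonicity of $CS(x^t)$ is Remark~\ref{rem:cs}: the trade $z^t$ is a Pareto improvement, so every point feasible in \eqref{eq:opt} at $x^{t+1}$ is feasible at $x^t$. For the rate $\gamma/t$ the plan is to show that the surplus drops by a definite amount at each step. Specifically, I will aim for an inequality of the form
\[
CS(x^t)-CS(x^{t+1})\ge c\,CS(x^{t+1})^2
\]
for some $c>0$ depending only on the bound of the sequence. The standard recursion argument, via $1/CS(x^{t+1})-1/CS(x^t)\ge c$, then gives $CS(x^t)\le \gamma/t$.

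To obtain this decrease I will use the utility gains. Write $\Delta_i^t:=u_i(x^{t+1}_i)-u_i(x^t_i)\ge 0$. Since the sequence is bounded, the monotone bounded sequences $u_i(x^t_i)$ converge, and therefore $\sum_t\Delta_i^t<\infty$. The key step is to relate $CS(x^t)$ to $\Delta^t$. At the market clearing, agent $i$ receives $x^{t+1}_i=x^t_i+z^t_i-(z^t_i\cdot p^t)g$. Meanwhile $w_i:=x^t_i+z_i^t-D_i(x^t_i;z^t_i)g$ satisfies $u_i(w_i)\ge u_i(x^t_i)$ by the attainment in Lemma~\ref{lem:D}. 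Hence
\[
x^{t+1}_i=w_i+CS_i^t\,g,\qquad CS_i^t=D_i(x^t_i;z^t_i)-z^t_i\cdot p^t\ge 0,
\]
and $\sum_i CS_i^t=CS(x^t)$. By Assumption~\ref{ass:1} and \eqref{eq:delta}, with $r$ a bound on the relevant norms (the $w_i$ are bounded because $z^t$ and $p^t$ are bounded, using boundedness of $x^t$, $x^{t+1}$ and the relation above), we get
\[
\Delta_i^t\ge u_i(w_i+CS_i^tg)-u_i(w_i)\ge\delta_i\,CS_i^t,
\]
so that
\[
CS(x^t)\le\delta^{-1}\sum_i\Delta_i^t,
\]
where $\delta:=\min_i\delta_i$. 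Summing over $t$ shows that $\sum_tCS(x^t)<\infty$. Combined with monotonicity, this gives $t\,CS(x^t)\le\sum_{s<t}CS(x^s)\le\gamma$ with $\gamma:=\delta^{-1}\sum_i\sup_t u_i(x^t_i)-u_i(x^0_i)$. This is the direct route, and it avoids the quadratic recursion entirely.

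For cluster points, let $x^{t_k}\to\bar x$. Upper semicontinuity gives $u_i(\bar x_i)\ge\limsup u_i(x^{t_k}_i)\ge u_i(x^0_i)$, so $\bar x$ is individually rational, and feasibility passes to the limit. To show $CS(\bar x)=0$, suppose instead that $r>0$ and $w$ are feasible in \eqref{eq:opt} at $\bar x$. I then need feasibility at $x^{t_k}$ of something yielding surplus of at least roughly $r/2$, which would contradict $CS(x^{t_k})\to0$. The natural candidate keeps $w$ but adjusts it by the difference $x^{t_k}_i-\bar x_i$ and by part of the numeraire. The constraint $u_i(w_i)\ge u_i(x^{t_k}_i)$ holds in the limit only as $u_i(w_i)\ge u_i(\bar x_i)\ge\lim u_i(x^{t_k}_i)$, where the limit exists by monotonicity. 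I therefore take
\[
w^k_i:=w_i+(x^{t_k}_i-\bar x_i)+\tfrac{r}{2|I|}g .
\]
These satisfy the resource constraint with $r/2$ in place of $r$.

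The main obstacle is verifying $u_i(w^k_i)\ge u_i(x^{t_k}_i)$ without continuity of $u_i$. I would handle it with Assumption~\ref{ass:1}: the extra numeraire adds at least $\delta_i r/(2|I|)$, and I need the perturbation $x^{t_k}_i-\bar x_i\to0$ to cost less than this. Since only upper semicontinuity is available, the perturbation may fall outside $\dom u_i$. If it does, I would fall back on concavity along the segment toward the point $x^{t_k}_i+\varepsilon g$ to absorb the perturbation, or else use the fact that $u_i(x^{t_k}_i)\uparrow$ to a limit $\le u_i(\bar x_i)$ together with a convex combination argument.
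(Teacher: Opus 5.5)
\textbf{Rate and monotonicity.} These parts follow the paper's argument, once you drop the quadratic recursion. You split $x^{t+1}_i=w_i+CS^t_ig$ with $u_i(w_i)\ge u_i(x^t_i)$ and apply \eqref{eq:delta} to get $u_i(x^{t+1}_i)-u_i(x^t_i)\ge\delta_iCS^t_i$. Summing over $i$ and $t$ and using that $CS(x^t)$ is nonincreasing then gives the bound. One correction: $z^t$ need not be bounded, since it is only determined up to adding $\alpha_ig$ with $\sum_i\alpha_i=0$. Bound $w_i$ directly from $w_i=x^{t+1}_i-CS^t_ig$ and $0\le CS^t_i\le CS(x^0)$.

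\textbf{Cluster points: there is a genuine gap.} You flag $u_i(w^k_i)\ge u_i(x^{t_k}_i)$ as the main obstacle, but you never establish it. For your particular $w^k_i$ it can actually fail under upper semicontinuity alone. Suppose $\dom u_i$ is a half-space whose boundary is parallel to $g$ and $w_i$ lies on that boundary. Then the shift $x^{t_k}_i-\bar x_i$ may point out of the domain, no multiple of $g$ brings it back, and $u_i(w^k_i)=-\infty$.

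The missing observation is that no perturbation is needed. Since $u_i(x^t_i)$ is nondecreasing, its limit equals its supremum. Upper semicontinuity therefore gives $u_i(\bar x_i)\ge\sup_tu_i(x^t_i)\ge u_i(x^t_i)$ for \emph{every} $t$, not only in the limit. Moreover, each $x^t$ is a feasible reallocation of $x^0$, so $\sum_i\bar x_i=\sum_ix^0_i=\sum_ix^t_i$. Hence any $(r,w)$ feasible in \eqref{eq:opt} at $\bar x$ is feasible, unchanged, at every $x^t$. This gives $CS(\bar x)\le CS(x^t)\le\gamma/t$ for all $t\ge1$, so $CS(\bar x)=0$; the reverse inequality $CS(\bar x)\ge0$ comes from taking $r=0$, $w=\bar x$. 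This is exactly how the paper concludes. Your per-agent shift $x^{t_k}_i-\bar x_i$ was not needed even for the resource constraint, because the aggregate endowment is conserved along the sequence.
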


\begin{proof}
Recall that
\[
CS(x^t)=\sum_{i\in I}CS^t_i,
\]
where the agent $i$'s surplus $CS^t_i:=D_i(x^t_i;z_i^t)-p^t\cdot z^t_i$ is nonnegative. Since, by \cite[Remark~2]{pen22}, $u_i(x_i^t)$ are nondecreasing in $t$, the optimum value $CS(x^t)$ of problem \eqref{eq:opt} is nonincreasing. It follows that $CS^t_i$ are bounded. Since $x_i^t$ is bounded, the equation
\[
x_i^{t+1}=x_i^t + z_i^t-D_i(x^t_i;z_i^t)g + CS^t_ig
\]
implies that $z_i^t-D_i(x^t_i;z_i^t)g$ is bounded as well. Let $r>0$ be such that $x_i^t+z^t_i-D_i(x^t_i;z_i^t)g\in\uball(r)$ and $CS^t_i\le r$ for all $t$. Formulation \eqref{eq:delta} of Assumption~\ref{ass:1} and the definition of the indifference price $D_i(x^t_i;z_i^t)$ give
\[
u_i(x_i^{t+1})\ge u_i(x^t_i+z^t_i-D_i(x^t_i;z_i^t)g) + \delta_iCS^t_i \ge u_i(x_i^t) + \delta_iCS^t_i.
\]
Adding up over agents gives
\[
CS(x^t) \le \sum_{i\in I}\frac{u_i(x_i^{t+1})-u_i(x_i^t)}{\delta_i}.
\]
and adding up over iterations $s=0,\ldots,t-1$,
\[
\sum_{s=0}^{t-1}CS(x^s) \le \sum_{i\in I}\frac{u_i(x_i^t)-u_i(x_i^0)}{\delta_i}.
\]
Since $CS(x^t)$ is nonincreasing in $t$, the left side is greater than $tCS(x^t)$ so
\[
CS(x^t)\le\frac{1}{t}\sum_{i\in I}\frac{u_i(x_i^t)-u_i(x_i^0)}{\delta_i}.
\]
Since $(x^t)_{t=0}^\infty$ is bounded and $u_i$ are concave and usc, the values $u_i(x^t_i)$ are bounded from above. Thus, the bound in the statement holds with
\[
\gamma:=\sup_t\sum_{i\in I}\frac{u_i(x_i^t)-u_i(x_i^0)}{\delta_i}.
\]

If $\bar x$ is a cluster point of $(x^t)_{t=0}^\infty$, the upper semicontinuity of $u_i$ and the monotonicity of $u_i(x_i^t)$ in $t$ give 
\[
u_i(\bar x_i)\ge\limsup_{t\to\infty} u_i(x_i^t)=\sup_t u_i(x_i^t).
\]
Thus, $CS(\bar x)\le CS(x^t)$ for all $t$ so $CS(\bar x)=0$.
\end{proof}

\section{Convergence to equilibrium prices}\label{sec:prices}

Recall from \eqref{eq:dasd} that the sequences of allocations $x^t$ and prices $p^t$ generated by the double auction market satisfy
\begin{equation*}
\begin{aligned}
  \partial D_i(x^t_i;z^t_i)&\ni p^t,\\
  \sum_{i\in I} z^t_i&=0,\\
  x^{t+1}_i &= x^t_i + z^t_i - (z^t_i\cdot p^t)g.
\end{aligned}
\end{equation*}
Section~\ref{sec:ebdd} gave conditions under which the sequences exist and are bounded. The purpose of this section is to show that all the cluster points $(\bar x,\bar p)$ of the sequence $(x^t,p^t)_{t=0}^\infty$ satisfy
\begin{align*}
  \partial D_i(\bar x_i;0)&\ni \bar p
\end{align*}
which means that $\bar x$ is a double auction equilibrium and that $\bar p$ is the corresponding market-clearing price; see Section~\ref{sec:equilibrium}. Theorem~\ref{thm:ft} then implies that, under Assumptions~\ref{ass:g} and \ref{ass:4}, $\bar x$ is a Pareto optimal allocation and $(\bar x,\bar p)$ is a Walras equilibrium with transfers.

We start with two preparatory lemmas that involve the notion of set convergence in the Painlevé-Kuratowski sense; see \cite[Chapter~4]{rw98}. A sequence of sets $C^t\subset\reals^n$ is said to {\em converge} to another set $C$, written $C^t\to C$, if
\begin{enumerate}
\item
  for every $x\in C$, there exist a sequence $(x^t)_{t=1}^\infty$ with $x^t\in C^t$ and $x^t\to x$,
\item
  if $x$ is the limit of a subsequence of points $x^t\in C^t$, then $x\in C$.
\end{enumerate}

\begin{lemma}\label{lem:setc}
If $(x^t_i)_{t=0}^\infty$ is a sequence of portfolios converging to a point $\bar x_i\in\dom u_i$ such that $u_i(x^t_i)$ increases to $u_i(\bar x_i)$, then
\[
C_i(x_i^t)\to C_i(\bar x_i).
\]
\end{lemma}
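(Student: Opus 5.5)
The plan is to verify the two defining properties of Painlevé--Kuratowski convergence directly from the definition
\[
C_i(x_i)=\{w_i\in\reals^J\mid u_i(x_i+w_i)\ge u_i(x_i)\}.
\]
Two hypotheses do the work: the monotone convergence $u_i(x^t_i)\upto u_i(\bar x_i)$, which gives $u_i(x^t_i)\le u_i(\bar x_i)$ for all $t$, and the upper semicontinuity of $u_i$ from Assumption~\ref{ass:g}. Since $\bar x_i\in\dom u_i$, the value $u_i(\bar x_i)$ is finite. Since $u_i(x^t_i)\ge u_i(x^0_i)>-\infty$, no $-\infty$ issues arise in the comparisons below. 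No convexity is needed for this argument.

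\emph{Inner limit (property 1).} Given $w_i\in C_i(\bar x_i)$, I would use the translated points
\[
w^t_i:=\bar x_i+w_i-x^t_i .
\]
These satisfy $w^t_i\to w_i$ because $x^t_i\to\bar x_i$. They also satisfy $x^t_i+w^t_i=\bar x_i+w_i$, so
\[
u_i(x^t_i+w^t_i)=u_i(\bar x_i+w_i)\ge u_i(\bar x_i)\ge u_i(x^t_i).
\]
The last inequality is exactly where monotonicity of $u_i(x^t_i)$ is used. Hence $w^t_i\in C_i(x^t_i)$ for every $t$.

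\emph{Outer limit (property 2).} Let $w^{t_k}_i\in C_i(x^{t_k}_i)$ with $w^{t_k}_i\to w_i$ along a subsequence. Then $x^{t_k}_i+w^{t_k}_i\to\bar x_i+w_i$, and upper semicontinuity gives
\[
u_i(\bar x_i+w_i)\ge\limsup_k u_i(x^{t_k}_i+w^{t_k}_i)\ge\lim_k u_i(x^{t_k}_i)=u_i(\bar x_i).
\]
Here the convergence $u_i(x^t_i)\to u_i(\bar x_i)$ is used. So $w_i\in C_i(\bar x_i)$.

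There is no real obstacle. The only point needing care is that the inner-limit construction relies on $u_i(x^t_i)\le u_i(\bar x_i)$, i.e.\ on the convergence being from below, while the outer-limit step relies on the convergence of the utility values together with upper semicontinuity. I would remark that without monotonicity the inner limit could fail, since the level $u_i(x^t_i)$ might exceed $u_i(\bar x_i)$.
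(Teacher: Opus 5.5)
Your proof is correct. It is the paper's argument written out from first principles. The paper observes that the level sets $\lev_{u_i(x^t_i)}u_i$ are closed (by upper semicontinuity) and nonincreasing (since the levels increase). By \cite[Exercise~4.3]{rw98} they therefore converge to their intersection $\lev_{u_i(\bar x_i)}u_i$, and the translation by $-x^t_i\to-\bar x_i$ is then handled by \cite[Exercise~4.29]{rw98}. Your inner-limit step is exactly the nestedness, and your outer-limit step combines the closedness (upper semicontinuity) with the translation, so you have replaced the two citations with a direct, self-contained verification.
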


\begin{proof}
By \cite[Exercise~4.3]{rw98}, $\lev_{u_i(x^t_i)}u_i\to\lev_{u_i(\bar x_i)}u_i$. Since
\[
C_i(x_i^t) = \{w\in\reals^J\mid u_i(x_i^t+w_i)\ge u_i(x_i^t)\} = \lev_{u_i(x^t_i)}u_i - x^t_i,
\]
the claim follows from \cite[Exercise~4.29]{rw98}.
\end{proof}

A sequence of set-valued mappings $S^t:\reals^n\tos\reals^m$ is said to {\em converge graphically} to another mapping $S:\reals^n\tos\reals^m$, written $S^t\underto{g}S$, if the sets
\[
\gph S^t:=\{(x,u)\in\reals^n\times\reals^m\mid u\in S^t(x)\}
\]
converge to
\[
\gph S:=\{(x,u)\in\reals^n\times\reals^m\mid u\in S(x)\}.
\]

\begin{lemma}\label{lem:gsd}
If $C_i(x_i^t)\to C_i(\bar x_i)$ and Assumption~\ref{ass:g} holds, then 
\[
\partial D_i(x_i^t;\cdot)\underto{g}\partial D_i(\bar x_i;\cdot).
\]
\end{lemma}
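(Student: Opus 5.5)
We will use Attouch's theorem in the form of \cite[Theorem~12.35]{rw98}. For proper lsc convex functions $f^t,f$ on $\reals^n$, it says that $\partial f^t\underto{g}\partial f$ holds if and only if $f^t\to f$ epigraphically and, in addition, there is a sequence of points $(x^t,v^t)\in\gph\partial f^t$ converging to some $(x,v)\in\gph\partial f$ with $f^t(x^t)\to f(x)$. We apply this to the convex functions $f^t:=-D_i(x^t_i;\cdot)$ and $f:=-D_i(\bar x_i;\cdot)$. By Lemma~\ref{lem:D} (which needs $x^t_i,\bar x_i\in\dom u_i$, as in the setting where the lemma is used), these are proper and lsc. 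Their subdifferentials are the negatives of the superdifferentials $\partial D_i$, so graphical convergence of one family is the same as that of the other. The proof therefore has two steps: epi-convergence of $f^t$ to $f$, and the normalization condition.

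\textbf{Epigraphical convergence.} The key observation is that the hypograph of $D_i(x;\cdot)$ is a linear image of $C_i(x)$, up to closure. Indeed,
\[
\{(w,\beta)\mid \beta\le D_i(x;w)\}=\{(c+rg,\beta)\mid c\in C_i(x),\ \beta\le r\}.
\]
The inclusion ``$\supseteq$'' follows directly from the definition. For ``$\subseteq$'', attainment of the supremum (Lemma~\ref{lem:D}) gives $w-D_i(x;w)g\in C_i(x)$. Both defining conditions of Painlev\'e--Kuratowski convergence can then be checked by hand.
\begin{enumerate}
\item \emph{Inner limit.} Given $(w,\beta)$ in the hypograph of $D_i(\bar x_i;\cdot)$, write $w=c+rg$ with $c\in C_i(\bar x_i)$ and $\beta\le r$. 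Lemma~\ref{lem:setc}-type convergence $C_i(x^t_i)\to C_i(\bar x_i)$ supplies $c^t\in C_i(x^t_i)$ with $c^t\to c$. Then $(c^t+rg,\beta)$ lies in the $t$th hypograph and converges to $(w,\beta)$.
\item \emph{Outer limit.} Take $(w^t,\beta^t)\to(w,\beta)$ along a subsequence, with $\beta^t\le D_i(x^t_i;w^t)$. The points $c^t:=w^t-\beta^t g$ lie in $C_i(x^t_i)$, because $C_i+\alpha g\subseteq C_i$ for $\alpha\ge 0$ by Assumption~\ref{ass:g}. They converge to $w-\beta g$, which therefore belongs to $C_i(\bar x_i)$. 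Hence $\beta\le D_i(\bar x_i;w)$.
\end{enumerate}
This is the step where Assumption~\ref{ass:g} enters, through the monotonicity of $C_i$ in the direction $g$; that monotonicity is what lets us avoid any unbounded $r^t$.

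\textbf{Normalization condition.} We use the fact that $D_i(x;0)=0$ for every $x$, together with the description of $\partial D_i(x^t_i;\cdot)$ in Lemma~\ref{lem:D2}. We need points $z^t\to\bar z$ and supergradients $p^t\in\partial D_i(x^t_i;z^t)$ converging to some $\bar p\in\partial D_i(\bar x_i;\bar z)$, with $D_i(x^t_i;z^t)\to D_i(\bar x_i;\bar z)$. The easiest route is to avoid choosing points by hand and instead use the alternative form of Attouch's theorem via Moreau envelopes: epi-convergence implies that the proximal points $z^t:=\mathrm{prox}_{f^t}(y)$, for a fixed $y$, converge to $\bar z:=\mathrm{prox}_f(y)$, and that $f^t(z^t)\to f(\bar z)$ (\cite[Theorem~7.37]{rw98}). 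The corresponding subgradients are $y-z^t\in\partial f^t(z^t)$, and these converge to $y-\bar z\in\partial f(\bar z)$. This is exactly the required sequence, so no explicit construction is needed.

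\textbf{Main obstacle.} I expect the main difficulty to be hypothesis-checking: Attouch's theorem requires proper lsc convex functions on the whole of $\reals^J$, and we must make sure that $D_i(x^t_i;\cdot)$ never takes the value $+\infty$. If $D_i(x;w)=+\infty$ for some $w$, concavity together with upper semicontinuity would force $D_i(x;0)=+\infty$, since $0\in\dom$. That would make $u_i(x+w-rg)\ge u_i(x)$ hold for all $r$, i.e.\ $-g\in C_i(x)^\infty$. But then $u_i^\infty(-g)\ge0$, which contradicts Assumption~\ref{ass:g}. Once properness is secured, the two steps above combine via \cite[Theorem~12.35]{rw98} to give the claim.
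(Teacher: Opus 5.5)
Your argument is correct and ends, like the paper's, with Attouch's theorem, but you obtain the epi-convergence of $-D_i(x^t_i;\cdot)$ by a different route.

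The paper writes $-D_i(x^t_i;\cdot)$ as the inf-projection in $r$ of $f^t(x_i,r)=-r+\delta_{C_i(x^t_i)}(x_i-rg)$. It deduces $f^t\to f$ epigraphically from the set convergence, and then passes to the inf-projections with \cite[Exercise~7.57]{rw98}. That step needs the recession condition $f^\infty(0,r)>0$ for $r\ne 0$, and Assumption~\ref{ass:g} enters there, as $-rg\notin\lev_0u_i^\infty$.

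You instead combine properness and attainment from Lemma~\ref{lem:D} with the invariance $C_i(x)+\alpha g\subseteq C_i(x)$ for $\alpha\ge 0$. This identifies the hypograph of $D_i(x;\cdot)$ exactly as $\{(w,\beta)\mid w-\beta g\in C_i(x)\}$. That set is the image of $C_i(x)\times\reals$ under the fixed invertible linear map $(c,\beta)\mapsto(c+\beta g,\beta)$, so hypo-convergence is inherited directly from $C_i(x^t_i)\to C_i(\bar x_i)$.

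Your route is more elementary and avoids the parametric-minimization machinery. It does lean on Lemma~\ref{lem:D} at every $x^t_i$, hence on $x^t_i,\bar x_i\in\dom u_i$; you rightly flag this, and the paper also uses it implicitly. Your properness paragraph is redundant, since Lemma~\ref{lem:D} already states that $D_i(x;\cdot)$ is proper.

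One correction: you have stated Attouch's theorem with the equivalence the wrong way round. \cite[Theorem~12.35]{rw98} says that $f^t\to f$ epigraphically if and only if $\partial f^t\underto{g}\partial f$ \emph{and} the normalization condition holds. Graphical convergence of subdifferentials alone does not give epi-convergence (take $f^t=f+t$). Epi-convergence by itself already yields $\partial f^t\underto{g}\partial f$, so your Moreau-envelope normalization step, though correct, is unnecessary. Since you only use the implication that is actually true, the conclusion stands.
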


\begin{proof}
Note first that
\[
-D_i(x^t_i;x_i) = \inf_{r\in\reals}f^t(x_i,r),
\]
where
\[
f^t(x_i,r) := -r+\delta_{C_i(x^t_i)}(x_i-rg).
\]
The convergence $C_i(x^t_i)\to C_i(\bar x_i)$ implies the convergence of the epigraphs of their indicator functions. Thus, by \cite[Exercise~7.47]{rw98}, the epigraphs of the functions $f^t$ converge to the epigraph of
\[
f(x_i,r) := -r+\delta_{C_i(\bar x_i)}(x_i-rg)
\]
Thus, by \cite[Exercise~7.57]{rw98}, the epigraphs of the functions $D_i(x^t;\cdot)$ converge to $D_i(\bar x_i;\cdot)$ provided
\begin{equation}\label{eq:reccond}
f^\infty(0,r)>0\quad\forall r\ne 0.
\end{equation}
It is easily checked that the recession function of the indicator function of a convex set equals the indicator function of the recession cone of the set. Thus, by \cite[Exercise~3.29]{rw98},
\[
f^\infty(0,r) = -r+\delta_{C_i(\bar x_i)^\infty}(-rg),
\]
where $C_i(\bar x_i)^\infty=\lev_0 u_i^\infty$, by \cite[Theorem~8.7]{roc70a}. Assumption~\ref{ass:g} implies $-rg\notin\lev_0 u_i^\infty$ for all $r>0$, so \eqref{eq:reccond} holds. The claim now follows from the Attouch theorem, which says that epi-convergence of a sequence of convex functions implies the convergence of the graphs of the corresponding subdifferential mappings; see \cite[Theorem~12.35]{rw98}.
\end{proof}

Our main result is now a simple consequence of the already established properties. It says that all cluster points of the sequence of allocation-price pairs generated by the double auction equilibria and the corresponding market-clearing prices. The result assumes the boundedness of the sequences, which was established in Section~\ref{sec:ebdd}.

\begin{theorem}\label{thm:conv}
Let $(x^t,p^t)_{t=0}^\infty$ be a bounded sequence of allocation-price pairs generated by the repeated double auction market. If Assumption~\ref{ass:1} holds and the utility functions $u_i$ are continuous relative to their domains, then all cluster points of $(x^t,p^t)_{t=0}^\infty$ are double auction equilibria and corresponding market-clearing prices. If, in addition, Assumption~\ref{ass:4} holds, the cluster points are Walras equilibria with transfers and the allocations are Pareto efficient.
\end{theorem}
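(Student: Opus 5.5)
Let $(\bar x,\bar p)$ be a cluster point of $(x^t,p^t)_{t=0}^\infty$ along a subsequence $t\in T$. The plan is to pass to the limit in the first relation of \eqref{eq:dasd}, $p^t\in\partial D_i(x^t_i;z^t_i)$, using the graphical convergence of Lemma~\ref{lem:gsd}. This requires showing that along $T$ the trades $z^t_i$, suitably normalized, converge to a point $\bar z_i$ for which $\partial D_i(\bar x_i;\bar z_i)$ yields the desired conclusion $\bar p\in\partial D_i(\bar x_i;0)$.

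\textbf{Step 1: limit properties of $\bar x$ and set convergence.} By Theorem~\ref{thm:primalconv}, the utilities $u_i(x^t_i)$ are nondecreasing and bounded above, and $\bar x$ is an individually rational double auction equilibrium with $u_i(\bar x_i)\ge\sup_t u_i(x^t_i)$. Continuity of $u_i$ relative to $\dom u_i$, together with $x^t_i\in\dom u_i$, gives $u_i(x^t_i)\to u_i(\bar x_i)$ along $T$. Hence $u_i(x^t_i)$ increases to $u_i(\bar x_i)$, which is exactly the hypothesis of Lemma~\ref{lem:setc} (applied to the subsequence). That lemma yields $C_i(x^t_i)\to C_i(\bar x_i)$, and Lemma~\ref{lem:gsd} then gives $\partial D_i(x^t_i;\cdot)\underto{g}\partial D_i(\bar x_i;\cdot)$.

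\textbf{Step 2: control the trades.} By the cash-redundancy remark, I may assume $g\cdot z^t_i=0$. From the proof of Theorem~\ref{thm:primalconv}, $z^t_i-D_i(x^t_i;z^t_i)g=x^{t+1}_i-x^t_i-CS^t_ig$, and $CS^t_i\le CS(x^t)\le\gamma/t\to0$. The next step is to show $x^{t+1}_i-x^t_i\to0$. I would argue this as follows. The sequence $x^{t+1}$ also clusters only at double auction equilibria, and the increments satisfy $u_i(x^{t+1}_i)-u_i(x^t_i)\to0$. If the increments did not vanish, I would use concavity to take a midpoint between $x^t$ and $x^{t+1}$. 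That midpoint is feasible with utilities at least $u_i(x^t_i)$, and in the limit along a further subsequence it gives a direction $d\neq0$ with $\bar x+d$ a feasible reallocation satisfying $u_i(\bar x_i+d_i)\ge u_i(\bar x_i)$. This alone does not contradict $CS(\bar x)=0$, so I instead aim at a weaker requirement: it suffices that $w^t_i:=z^t_i-D_i(x^t_i;z^t_i)g$ has some limit $\bar w_i$ along a further subsequence, which is guaranteed by boundedness. Since $g\cdot z^t_i=0$, this forces $z^t_i$ and $D_i(x^t_i;z^t_i)$ to converge as well, say to $\bar z_i$ and $\bar d_i$ with $\bar w_i=\bar z_i-\bar d_ig$.

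\textbf{Step 3: pass to the limit and reduce to zero trades.} Graphical convergence gives $\bar p\in\partial D_i(\bar x_i;\bar z_i)$ for every $i$, and $\sum_i\bar z_i=0$. By Theorem~\ref{thm:kkt1}, $\bar z$ is therefore a market-clearing trade at $\bar x$ with price $\bar p$. Because $CS(\bar x)=0$, the zero trade is also optimal. The optimal-solution set and dual-solution set are products in the Lagrangian saddle-point sense, so every dual solution pairs with every primal solution. Hence $\bar p$ satisfies the KKT conditions \eqref{kkt} with $z=0$, i.e.\ $\bar p\in\partial D_i(\bar x_i;0)$ for all $i$, which is (d) of Theorem~\ref{thm:ft}. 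Under Assumption~\ref{ass:4}, Theorem~\ref{thm:ft} gives (d)$\Rightarrow$(a)$\Rightarrow$(b), so the cluster points are Walras equilibria with transfers and the allocations are Pareto efficient.

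The main obstacle is Step 1: verifying that Lemma~\ref{lem:setc} applies along a subsequence, and in particular that relative continuity yields $u_i(x^t_i)\to u_i(\bar x_i)$ when $\bar x_i$ is on the boundary of $\dom u_i$. Step 2's detour is then avoided by the saddle-point interchangeability argument in Step 3, which needs only boundedness of $z^t$ and not vanishing trades.
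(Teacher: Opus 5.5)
Your proposal is correct and follows essentially the same route as the paper. Relative continuity makes $u_i(x^t_i)$ increase to $u_i(\bar x_i)$ along the subsequence. Lemmas~\ref{lem:setc} and~\ref{lem:gsd} then let you pass to the limit in $p^t\in\partial D_i(x^t_i;\cdot)$. Theorem~\ref{thm:kkt1} together with $CS(\bar x)=0$ from Theorem~\ref{thm:primalconv} transfers $\bar p$ to the zero trade, and Theorem~\ref{thm:ft} finishes.

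The only difference is which bounded representative of the trades modulo $g$ you pick. The paper uses $d^t_i=z^t_i-(z^t_i\cdot p^t)g=x^{t+1}_i-x^t_i$, which is bounded at once. You use the $g$-orthogonal representative and bound it through $CS^t_i$, which is also valid. The exploratory detour in your Step~2 is therefore unnecessary. The ``obstacle'' you flag in Step~1 is not one: $\bar x_i\in\dom u_i$, and continuity relative to $\dom u_i$ is exactly the hypothesis.
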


\begin{proof}
Defining $d^t_i:=z^t_i - (z^t_i\cdot p^t)g$, Lemma~\ref{lem:D2} gives $\partial D_i(x^t_i;z^t_i)=\partial D_i(x^t_i;d^t_i)$ and thus, \eqref{eq:dasd} can be written as
\begin{align*}
  \partial D_i(x^t_i;d^t_i)&\ni p^t,\\
  \sum_{i\in I} d^t_i&=0.
\end{align*}
Since $x^{t+1}_i=x^t_i+d^t_i$, the boundedness of $x^t$ implies the boundedness of $d^t$. It follows that if $(\bar x,\bar p)$ is a cluster point of the sequence of allocations and prices, we can find a subsequence along which $(x^t,d^t,p^t)$ converge to a point $(\bar x,\bar d,\bar p)$. Since, by Remark~\ref{rem:cs}, the utilities are nondecreasing in $t$ and since, by the standing assumption, $x^0_i\in\dom u_i$, we have $\lim_tu_i(x^t_i)>-\infty$. The upper semicontinuity of $u_i$ thus gives $u_i(\bar x_i)>-\infty$, so $\bar x_i\in\dom u_i$. The relative continuity of $u_i$ now gives $u_i(x^t_i)\upto u_i(\bar x_i)$. Assumption~\ref{ass:1} implies Assumption~\ref{ass:g}, so Lemmas~\ref{lem:setc} and \ref{lem:gsd}, give $\partial D_i(x_i^t;\cdot)\underto{g}\partial D_i(\bar x_i;\cdot)$. Thus,
\begin{align*}
  \partial D_i(\bar x_i;\bar d_i)&\ni\bar p,\\
  \sum_{i\in I}\bar d_i&=0.
\end{align*}
By Theorem~\ref{thm:kkt1}, this implies that $(\bar d,\bar p)$ are primal-dual optimal and that the optimum values of the primal and the dual are equal. On the other hand, Theorem~\ref{thm:primalconv} says that $CS(\bar x)=0$, so the zero trade is also primal optimal. By Theorem~\ref{thm:kkt1} again, $(0,\bar p)$ satisfies the optimality conditions. Thus, $\bar x$ is a double auction equilibrium and $\bar p$ is the corresponding market-clearing price. The last claim follows directly from Theorem~\ref{thm:ft}
\end{proof}

The convergence of the prices generated by double auctions has been observed in numerous experimental settings since the pioneering works of Smith~\cite{smi62}. The convergence was observed also in the numerical experiments of \cite{pen22}, but there, the phenomenon was left unexplained. The experiment involved five traded assets, the numeraire portfolio $g=(1,\ldots,0)$ and 100 agents with randomly generated initial endowments and Cobb-Douglas utilities. Table~\ref{tab:num} reproduces the progress of the total consumer surplus and the market-clearing prices $p^t$ along the iterated double auctions.

\begin{table}
\centering
{\scriptsize
\begin{tabular}{l|c|ccccccc}
\toprule
$t$ &     CS &     &     &    $p_t$ &     &     \\
\midrule
1  & 38.691 & 1.000 & 0.159 & 0.135 & 0.132 & 0.132 \\
2  & 11.534 & 1.000 & 0.690 & 0.585 & 0.580 & 0.577 \\
3  &  4.266 & 1.000 & 0.843 & 0.711 & 0.710 & 0.709 \\
4  &  2.091 & 1.000 & 0.894 & 0.754 & 0.753 & 0.753 \\
5  &  1.159 & 1.000 & 0.916 & 0.774 & 0.772 & 0.772 \\
6  &  0.711 & 1.000 & 0.928 & 0.784 & 0.782 & 0.782 \\
7  &  0.486 & 1.000 & 0.933 & 0.790 & 0.787 & 0.786 \\
8  &  0.362 & 1.000 & 0.936 & 0.792 & 0.789 & 0.789 \\
9  &  0.289 & 1.000 & 0.938 & 0.794 & 0.791 & 0.791 \\
10 &  0.236 & 1.000 & 0.939 & 0.795 & 0.792 & 0.792 \\
11 &  0.187 & 1.000 & 0.940 & 0.796 & 0.794 & 0.793 \\
12 &  0.141 & 1.000 & 0.941 & 0.797 & 0.795 & 0.795 \\
13 &  0.098 & 1.000 & 0.942 & 0.798 & 0.796 & 0.796 \\
14 &  0.061 & 1.000 & 0.943 & 0.799 & 0.797 & 0.797 \\
15 &  0.032 & 1.000 & 0.944 & 0.800 & 0.797 & 0.797 \\
16 &  0.012 & 1.000 & 0.944 & 0.800 & 0.798 & 0.798 \\
17 &  0.002 & 1.000 & 0.944 & 0.800 & 0.798 & 0.798
\end{tabular}
}
\caption{Consumer surplus and the market-clearing prices along iterated double auctions}\label{tab:num}
\end{table}

\section{Conclusions}

This paper analyzed the double auction mechanism in a pure exchange economy consisting of a finite set of agents with given initial endowments and preferences described by concave utility functions. The main results, Theorems~\ref{thm:ft}, \ref{thm:bdd} and \ref{thm:conv}, show that, under general conditions on the utility functions, repeated double auctions generate sequences of allocations and prices whose cluster points are Walras equilibria with transfers. To our knowledge, this is the first theoretical explanation of the price convergence observed in laboratory experiments with double auctions.

The presented theory has, however, some obvious limitations. First, both laboratory experiments and the computational results suggest that the allocations and prices generated by double auctions actually converge while Theorem~\ref{thm:conv} only asserts that all the cluster points are equilibria. Second, our results assume that the agents bid their indifference functions, but in reality, agents often behave more strategically by e.g.\ bid shading where they submit bids below their true indifference prices. Third, as discussed briefly in Remark~\ref{rem:cda}, most securities exchanges implement the double auction mechanism in continuous time, where incoming bids are matched immediately or added to the limit order book which is visible to the trading agents. Some of the above extensions will be addressed in a follow-up paper.

\section{Appendix}

This appendix gives a quick review of optimality conditions and duality in constrained convex optimization. All the results here can be found in \cite{roc70a} or \cite[Section~A.9]{pp24} and, in particular, \cite[Example~A.46]{pp24}.

Given integers $n,l,m$ with $l\le m$, let $f_j$, $j=0,\ldots,l$ be lsc and convex and $f_j$, $j=l+1,\ldots,m$ be affine functions on $\reals^n$ and consider the problem
\begin{equation}\label{p}
\begin{aligned}
  &\minimize & \quad f_0(x)&\quad \ovr x\in\reals^n,\\
  &\st & f_j(x)&\le 0\quad j=1,\ldots,l,\\
  & & f_j(x)&=0\quad j=l+1,\ldots,m.
\end{aligned}
\end{equation}
We will call this the {\em primal problem}. We assume that there exists at least one point $x\in\dom f_0$ that satisfies the constraints.

The {\em Lagrangian function} $L:\reals^n\times\reals^m\to\ereals$ associated with \eqref{p} is given by
\[
L(x,y)=
\begin{cases}
  +\infty & \text{if $x\notin\cap_{j=0}^l\dom f_j$,}\\
  f_0(x)+\sum_{j=1}^my_jf_j(x) & \text{if $x\in\cap_{j=0}^l\dom f_j$ and $y\in\reals^l_+\times\reals^{m-l}$,}\\
  -\infty & \text{otherwise}.
\end{cases}
\]
The {\em dual problem} is that of maximizing the concave function $g:\reals^m\to\ereals$ given by
\[
g(y):=\inf_{x\in\reals^n}L(x,y).
\]
A pair $(x,y)\in\reals^n\times\reals^m$ satisfies the {\em KKT-conditions} if 
\[
\partial_xL(x,y)\ni 0,\quad\partial_y[-L](x,y)\ni 0,
\]
where $\partial_x$ (resp.\ $\partial_y$) denotes the subdifferential with respect to the $x$-variable ($y$-variable).

Plugging in the definition of $L$ gives
\[
g(y) = \inf_{x\in\reals^n}\Big\{f_0(x)+\sum_{j=1}^my_jf_j(x)\;\Big|\; x\in\bigcap_{j=1}^l\dom f_j\Big\}
\]
for $y\in\reals^l_+\times\reals^{m-l}$ while $g(y):=-\infty$ for $y\notin\reals^l_+\times\reals^{m-l}$. The KKT conditions can be written as
\begin{equation*}
  \begin{gathered}
  \partial_x[f_0+\sum_{j=1}^my_jf_j](x)\ni 0,\\
  y_j\ge 0,\quad y_jf_j(x)=0\quad j=1,\ldots,l.
  \end{gathered}
\end{equation*}

The following is a special case of the equivalence (1) $\Leftrightarrow$ (5) in~\cite[Theorem~A.73]{pp24}.

\begin{theorem}\label{thm:kkt1}
A pair $(\bar x,\bar y)\in\reals^n\times\reals^m$ satisfies the KKT conditions if and only if $\bar x$ solves the primal, $\bar y$ solves the dual and the optimum values are equal.
\end{theorem}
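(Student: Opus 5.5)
The plan is to prove the two directions of Theorem~\ref{thm:kkt1} by a single sandwich inequality. The central tool will be weak duality: $g(y)\le f_0(x)$ for every primal-feasible $x$ and every $y$. To see this, take $x$ feasible and $y\in\reals^l_+\times\reals^{m-l}$. The case $x\notin\dom f_0$ is trivial because then $f_0(x)=+\infty$. Otherwise $L(x,y)=f_0(x)+\sum_j y_jf_j(x)\le f_0(x)$, since $y_jf_j(x)\le 0$ for $j\le l$ and $f_j(x)=0$ for $j>l$. Hence $g(y)\le L(x,y)\le f_0(x)$. For $y$ outside $\reals^l_+\times\reals^{m-l}$ we have $g(y)=-\infty$, so the inequality holds there too.

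For the direction from KKT to optimality, fix a KKT pair $(\bar x,\bar y)$ and read the two conditions as saddle-point statements.
\begin{itemize}
\item The condition $\partial_x L(\bar x,\bar y)\ni 0$ says that $\bar x$ minimizes the convex function $L(\cdot,\bar y)$, by the definition of the subgradient. This gives $g(\bar y)=L(\bar x,\bar y)$.
\item The condition $\partial_y[-L](\bar x,\bar y)\ni 0$ says that $\bar y$ maximizes the concave function $L(\bar x,\cdot)$.
\end{itemize}
Before using the second condition I must check that $\bar x\in\cap_{j\le l}\dom f_j$. If not, $L(\bar x,\cdot)\equiv+\infty$ and its negative has an empty subdifferential under the paper's convention, so this case cannot occur. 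On the set $\reals^l_+\times\reals^{m-l}$ the function $L(\bar x,\cdot)$ is affine, equal to $f_0(\bar x)+y\cdot F(\bar x)$ with $F=(f_1,\dots,f_m)$. Maximality of $\bar y$ over this cone then yields $f_j(\bar x)=0$ for $j>l$, and $f_j(\bar x)\le 0$ together with $\bar y_jf_j(\bar x)=0$ for $j\le l$. In particular $\bar x$ is feasible and $L(\bar x,\bar y)=f_0(\bar x)$. Combining with the first bullet gives $g(\bar y)=f_0(\bar x)$. Weak duality then forces $\bar x$ to be primal optimal, $\bar y$ to be dual optimal, and the two optimal values to agree.

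For the converse, suppose $\bar x$ is primal optimal, $\bar y$ is dual optimal, and $f_0(\bar x)=g(\bar y)$. I will chain the inequalities
\[
f_0(\bar x)=g(\bar y)\le L(\bar x,\bar y)\le f_0(\bar x),
\]
where the first inequality is the definition of $g$ and the second is the weak-duality computation above. So equality holds throughout. The first equality means $\bar x$ minimizes $L(\cdot,\bar y)$, which gives $0\in\partial_xL(\bar x,\bar y)$. The second equality gives complementary slackness, $\sum_j\bar y_jf_j(\bar x)=0$, with each term of the sum nonpositive, so each term is zero. Feasibility of $\bar x$ together with complementarity makes $\bar y$ a maximizer of the affine-on-a-cone function $L(\bar x,\cdot)$, which gives $0\in\partial_y[-L](\bar x,\bar y)$.

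The main obstacle is bookkeeping with extended values rather than any real analysis. I need to handle the case where $f_0(\bar x)$ or the common optimal value is $\pm\infty$, and the convention that $L$ is $+\infty$ off the domain and $-\infty$ off the dual cone, so that the subdifferential conditions correctly encode the sign constraints. If the optimal value is infinite, the equalities above degenerate. I would exclude this by noting that a KKT pair forces $L(\bar x,\bar y)$ to be finite, and in the converse direction by invoking the paper's assumption that a feasible point of $\dom f_0$ exists, so that the common value is finite whenever it is attained. Alternatively, the whole statement follows directly as the special case (1)$\Leftrightarrow$(5) of \cite[Theorem~A.73]{pp24}, which is how I would present it, with the above serving as a self-contained check.
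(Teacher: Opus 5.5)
Your proof is correct, and your closing remark is exactly what the paper does. The paper gives no argument of its own; it obtains the statement as the special case (1)$\Leftrightarrow$(5) of the general conjugate-duality theorem \cite[Theorem~A.73]{pp24}.

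Your main text instead verifies the equivalence by hand for this particular Lagrangian. It uses weak duality $g(y)\le f_0(x)$ and the observation that $0\in\partial_y[-L](\bar x,\bar y)$ amounts to maximizing the linear function $y\mapsto y\cdot F(\bar x)$ over $\reals^l_+\times\reals^{m-l}$, i.e.\ to feasibility of $\bar x$ plus complementary slackness.

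The citation buys generality: it covers general convex perturbation structures, the same source supplies Theorem~\ref{thm:kkt2}, and it settles the extended-value bookkeeping once and for all. Your route is elementary and shows exactly what is used. That is the existence of a feasible point in $\dom f_0$, together with properness of $f_0$, so that an attained primal value (which is $<+\infty$ by that feasibility assumption) is also $>-\infty$. The paper assumes properness tacitly as well.

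One small correction: the paper never fixes a convention that subdifferentials are empty where the function is infinite, but you do not need one. With the literal inequality definition, the two bad cases are excluded as follows.
\begin{itemize}
\item If $L(\bar x,\bar y)=+\infty$, the first KKT condition would force $L(\cdot,\bar y)\equiv+\infty$. The feasible point in $\dom f_0$ rules this out.
\item If $L(\bar x,\bar y)=-\infty$, then $\bar x\in\bigcap_{j=0}^l\dom f_j$ and $\bar y\notin\reals^l_+\times\reals^{m-l}$. The second condition then fails at $y=0$, where $-L(\bar x,0)=-f_0(\bar x)$ is finite.
\end{itemize}
Hence any KKT pair has $L(\bar x,\bar y)$ finite under either reading, and the rest of your argument goes through unchanged.
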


The following can be found e.g.~in \cite[Example~A.76]{pp24}.

\begin{theorem}\label{thm:kkt2}
Assume that there exists a feasible $x^0\in\reals^n$ such that $f_j(x^0)<0$ for $j=1,\ldots, l$ and either $m=l$ (no equality constraints) or $x^0\in\cap_{j=0}^l\rinte\dom f_j$. Then the optimum values of the primal and the dual are equal and the dual optimum is attained. Moreover, an $\bar x\in\reals^n$ solves the primal problem if and only if it is feasible and there exists $\bar y\in\reals^m$ such that $(\bar x,\bar y)$ satisfies the KKT conditions.
\end{theorem}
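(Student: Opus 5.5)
The statement is a Slater-type strong duality result. I would prove it through the \emph{perturbation} (value function) approach of \cite[Section~29]{roc70a}, after which the KKT characterization follows from Theorem~\ref{thm:kkt1}.

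\textbf{Value function and duality.} Define $\varphi:\reals^m\to\ereals$ by
\[
\varphi(u):=\inf_{x\in\reals^n}\{f_0(x)\mid f_j(x)\le u_j,\ j=1,\ldots,l,\ f_j(x)=u_j,\ j=l+1,\ldots,m\}.
\]
Since the $f_j$ are convex for $j\le l$ and affine for $j>l$, $\varphi$ is convex, and $\varphi(0)$ is the primal optimum value. A direct computation gives, for every $y\in\reals^l_+\times\reals^{m-l}$,
\[
\inf_u\{\varphi(u)+y\cdot u\}=g(y),
\]
and for $y\notin\reals^l_+\times\reals^{m-l}$ both sides equal $-\infty$, because $\varphi$ is nonincreasing in the inequality coordinates. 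So $g(y)=-\varphi^*(-y)$ in the convex-conjugate sense. It follows that the dual attains its optimum with no duality gap as soon as $\partial\varphi(0)\ne\emptyset$: if $-\bar y\in\partial\varphi(0)$, then $g(\bar y)=\varphi(0)$. Weak duality $g\le\varphi(0)$ holds trivially. If $\varphi(0)=-\infty$, every $y$ attains the dual value $-\infty$. Hence it suffices to show that $0\in\rinte\dom\varphi$ and, when $\varphi(0)$ is finite, that $0\in\rinte\dom\varphi$ implies $\partial\varphi(0)\ne\emptyset$. The latter is \cite[Theorem~23.4]{roc70a}, once we know $\varphi$ is proper. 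If $\varphi$ took the value $-\infty$ anywhere, it would be $-\infty$ on all of $\rinte\dom\varphi$ by \cite[Theorem~7.2]{roc70a}, so finiteness of $\varphi(0)$ gives properness.

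\textbf{Main step: $0\in\rinte\dom\varphi$.} We have $u\in\dom\varphi$ exactly when there is an $x\in\bigcap_{j=0}^l\dom f_j$ with $f_j(x)\le u_j$ for $j\le l$ and $f_j(x)=u_j$ for $j>l$.
\begin{itemize}
\item In the case $m=l$, the point $x^0$ with $f_j(x^0)<0$ shows that a whole neighbourhood of $0$ lies in $\dom\varphi$, so $0\in\inte\dom\varphi$ directly.
\item In the general case, write $A(x):=(f_{l+1}(x),\ldots,f_m(x))$, which is affine. Then $\dom\varphi=\{(v,w)\mid \exists x\in\dom F,\ F(x)\le v,\ A(x)=w\}$, where $F=(f_1,\ldots,f_l)$ and $\dom F:=\bigcap_{j=0}^l\dom f_j$.
\item By \cite[Theorem~6.5]{roc70a}, $x^0\in\rinte\dom F$ because $x^0\in\bigcap_j\rinte\dom f_j$. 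By \cite[Theorem~6.6]{roc70a}, $A(x^0)=0$ lies in $\rinte A(\dom F)$.
\item Take $u=(v,w)$ in the affine hull of $\dom\varphi$ near $0$. The $w$-component lies in the affine hull of $A(\dom F)$ and is small, so $w=A(x)$ for some $x\in\dom F$ close to $x^0$. One can choose such an $x$ along a segment from $x^0$ inside $\rinte\dom F$.
\item Since the $f_j$, $j\le l$, are convex and finite on $\rinte\dom F$, they are continuous there relative to the affine hull \cite[Theorem~10.1]{roc70a}. Hence $f_j(x)<0$ still holds, and $f_j(x)\le v_j$ holds for $v$ small enough. This gives $u\in\dom\varphi$.
\end{itemize}
The delicate point, and the step I expect to be the main obstacle, is the choice of $x$ with $A(x)=w$ close to $x^0$. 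I would handle it by restricting $A$ to the affine hull of $\dom F$ and using that an affine map is open onto the affine hull of its image.

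\textbf{KKT characterization.} With strong duality and dual attainment established, let $\bar x$ solve the primal and let $\bar y$ be a dual solution. By Theorem~\ref{thm:kkt1}, $(\bar x,\bar y)$ satisfies the KKT conditions. Conversely, if $\bar x$ is feasible and $(\bar x,\bar y)$ satisfies the KKT conditions for some $\bar y$, then Theorem~\ref{thm:kkt1} shows that $\bar x$ is primal optimal. This completes the plan.
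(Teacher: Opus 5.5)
Your proof is correct, and it takes essentially the paper's route: the paper gives no argument of its own, citing \cite[Example~A.76]{pp24}, and your chain ($g(y)=-\varphi^*(-y)$, then $0\in\rinte\dom\varphi$ via openness of the affine equality map on the affine hull of $\dom F$, then \cite[Theorem~23.4]{roc70a}, then Theorem~\ref{thm:kkt1}) is the standard perturbation-duality argument of the framework it cites. The one thing to make explicit is that for $m=l$ your neighbourhood argument needs $x^0\in\dom f_0$; this is the only reading of ``feasible'' under which the statement holds, since $f_0(x)=x_1+\delta_{\{x_2=0\}}(x)$, $f_1(x)=x_1^2+x_2$ and the strictly feasible $x^0=(0,-1)\notin\dom f_0$ give $g(y)=-1/(4y)$ for $y>0$, so the dual never attains the primal value $0$.
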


\bibliographystyle{plain}
\bibliography{sp}

\end{document}